\newtheorem{theo}{Theorem}[section]
\newtheorem{defi}[theo]{Definition}
\newtheorem{prop}[theo]{Proposition}
\newtheorem{coro}[theo]{Corollary}
\newtheorem{lemma}[theo]{Lemma}
\newtheorem{example}[theo]{Example}
\DeclareMathOperator{\Supp}{Supp}
\DeclareMathOperator{\Ker}{Ker}
\DeclareMathOperator{\im}{Im}
\DeclareMathOperator{\Sq}{Sq}
\DeclareMathOperator{\D}{D}
\newcommand{\C}{\mathbb{C}}
\newcommand{\Z}{\mathbb{Z}}
\newcommand{\F}{\mathbb{F}}
\newcommand{\h}{\mathcal{H}}
\newcommand{\Pauli}{\mathcal{P}}
\newcommand\ket[1]{|#1\rangle}
\title{Generalized surface codes and packing of logical qubits}
\let\@fnsymbol\@arabic
\author{Nicolas Delfosse\thanks{Department of Physics and Astronomy, University of California, Riverside, CA, USA}
$^{,}$\thanks{IQIM, California Institute of Technology, Pasadena, CA, USA}\ \ 
Pavithran Iyer\thanks{Département de Physique and Institut Quantique, Université de Sherbrooke, Québec, Canada}
\ and David Poulin\footnotemark[3]
}
\begin{document}

\maketitle


\begin{abstract}
We consider a notion of relative homology (and cohomology) for surfaces with two types of boundaries.
Using this tool, we study a generalization of Kitaev's code based on surfaces with mixed boundaries.
This construction includes both Bravyi and Kitaev's  \cite{BK98} and Freedman and Meyer's \cite{FM01} extension 
of Kitaev's toric code.
We argue that our generalization offers a denser storage of quantum information.
In a planar architecture, we obtain a three-fold overhead reduction 
over the standard architecture consisting of a punctured square lattice.
\end{abstract}

\let\thefootnote\relax\footnote{Corresponding author: Nicolas Delfosse - ndelfoss@caltech.edu}

\section{Surface codes}

Kitaev's toric code \cite{Ki03} is one of the most emblematic examples of topological 
quantum codes.
It is defined by local constraints on qubits placed on a torus. The properties of 
the code depend on the topology of the surface. For instance, the number of
encoded qubits is determined by the genus of the surface and the minimum distance 
is the length of the shortest cycle with non-trivial homology.
Such a code can be defined on an arbitrary closed surface.

\medskip
For practical purposes, a planar layout of the qubits is desirable.
Unfortunately, in such a case the prescription of Kitaev to construct quantum codes
yields a trivial code that preserves just a single state.
The work of Freedman and Meyer \cite{FM01} and Bravyi and Kitaev \cite{BK98}
extends Kitaev's construction in two directions.

\medskip
(i) Kitaev construction can be extended to a surface with boundaries, that is a surface
punctured with holes \cite{FM01}.

\medskip
(ii) Two different kinds of boundaries called open and closed boundaries 
(also called rough and smooth respectively) can be introduced along the 
outer boundary of a planar lattice \cite{BK98}.

\medskip
These modifications increase the degeneracy of the ground space of 
the Hamiltonian, allowing for a non-trivial planar surface code.
Punctured planar lattices have been proposed as quantum memory or 
quantum computing architecture where logical operations are realized by braiding holes
\cite{DKLP02, RH07, RHG07, RHG06, WFHL11, FMMC12, HFDV12:surface_code_surgery, BT15:hyperbolic}.
Understanding and optimizing their performance is a central question for the physical implementation of quantum information processing.

\medskip
Surprisingly, mixed boundaries (partially open and partially closed boundaries) have been examined 
previously only for the outer boundary of a planar lattice as in (ii), but not along 
all the punctures. 
In the present work, we combine the two ideas (i) and (ii), producing better surface 
codes.
First, we introduce a family of generalized surface codes based on 
punctured surfaces where any holes can have partially open and partially closed 
boundaries. Justifying the commutation relations between stabilizer generators 
as well as computing the parameters of these codes is non-trivial and requires an in-depth 
study of a notion of relative homology. 
We determine a closed formula for the parameters of these generalized surface codes 
and we describe graphically their logical operators.
Then, we propose a planar architecture based on holes with mixed boundaries 
which improves over the parameters of standard constructions of 
two-dimensional surface codes. We obtain a three-fold reduction of the overhead compared
to the square lattice punctured with closed holes \cite{FMMC12}.

\medskip
Besides providing new constructions of surface codes, the formalism developed here 
is necessary in order to optimize the performance of different fault-tolerant architecture.
If it is clear that surface codes can make a quantum computer fault-tolerant, the details of the
architecture of such a fault-tolerant quantum computer are still to be determined. Optimizing the 
design of surface codes can lead to a much more favourable overhead. Our formalism, which 
encompasses all the previously considered constructions of surface codes
\cite{Ki03, FM01, BK98, DKLP02, RH07, RHG07, RHG06, WFHL11, FMMC12, Ze09, HFDV12:surface_code_surgery, BT15:hyperbolic}, 
provides an ideal framework to compare and optimize quantum computing architectures 
based on surface codes.

\medskip
Generalized surface codes are defined in Section~\ref{section:codes}. 
The definition of these codes and the computation of their parameters rely on a particular 
notion of relative homology of surfaces with boundaries that we study 
in Section~\ref{section:homology}. Finally, a planar architecture based on generalized surface 
codes is proposed in Section~\ref{section:planar_codes}.
This last section, which focuses on the problem of optimizing the packing of logical qubits in a 
planar lattice, can be read independently.

\section{Background on stabilizer codes}

Let us recall the definition of {\em stabilizer codes} \cite{Go97}.
In what follows, $I, X, Y$ and $Z$ denote the usual Pauli matrices. 
Pauli operators are $n$-fold tensor products of Pauli matrices 
$
i^a P_1 \otimes P_2 \otimes \dots \otimes P_n
$
where $a \in \Z_4$. Denote by $\Pauli_n$ the set of
$n$-qubit Pauli operators.

\medskip
A stabilizer code of length $n$ is defined as the common $(+1)$-eigenspace 
of a family of $n$-qubit commuting Pauli operators $S_1, S_2, \dots S_r$.
Equivalently, it is the degenerate ground space of the Hamiltonian $H = -\sum_i S_i$.
It is a $2^{n-r}$-dimensional subspace of $(\C^2)^{\otimes n}$, whenever
the $r$ Pauli operators $S_i$ are independent. The quantum code then encodes
$k=n-r$ qubits into $n$ qubits and its parameters are denoted as $[[n, k]]$. The 
group generated by the operators $S_i$ is denoted $\mathcal{S}$ and is called the stabilizer 
group of the quantum code $C(\mathcal S)$.

\medskip
Assume that an encoded state $\ket \psi$ is subjected to a Pauli
error $E \in \Pauli_n$. The system is then in the state $E \ket \psi$.
The correction procedure for stabilizer codes is based on 
the {\em syndrome measurement}, that is the measurement of the 
observables $S_i$ for all $i=1,\dots, r$.
These commuting operators can be measured simultaneously, providing a 
measurement outcome $\pm 1$ for all $i$. The outcome of the 
measurement of a stabilizer $S_i$ is $(-1)^{\sigma_i}$ where 
$\sigma_i \in \F_2$ is defined by the equation 
$E S_i = (-1)^{\sigma_i} S_i E$. This defines the syndrome 
$\sigma(E) = (\sigma_1, \dots, \sigma_r) \in \F_2^r$.
Whenever the system being measured is in a state $\ket \psi$ that 
belongs to the code space the syndrome is trivial. Hence, a non-trivial 
syndrome indicates the presence of an error.

\medskip
Among stabilizer codes, {\em CSS codes} \cite{CS96, St96} are those defined by $r_X$ operators 
chosen from $\{I, X\}^{\otimes n}$ and $r_Z = r-r_X$ 
operators chosen from $\{I, Z\}^{\otimes n}$. Writing Pauli errors as $E = i^a E_Z E_X$ 
with $E_Z \in \{I, Z\}^{\otimes n}$ and $E_X \in \{I, X\}^{\otimes n}$,
we see that the syndrome can be partitioned as a pair of vectors 
$(\sigma_Z, \sigma_X)$, where $\sigma_Z$ (resp. $\sigma_X$) contains the 
measurement outcomes of the $r_X$ operators of $X$-type (resp. the $r_Z$ operators of 
$Z$-type). As suggested by the notation, the vector $\sigma_Z$ depends
only on $E_Z$ and $\sigma_X$ depends only on $E_X$. Since quantum states 
are defined up to a phase, we can assume that the phase $i^a$ of the error is 
trivial. Our goal will be to identify $E_X$ from its syndrome $\sigma_X$
and $E_Z$ from its syndrome $\sigma_Z$.

\medskip
An error which has a trivial syndrome is called a {\em logical operator} 
or a {\em logical error}. These errors preserve the code space.
A stabilizer, that is an element of the stabilizer group, 
is a particular logical operator that has a trivial action on 
encoded qubits. By a {\em non-trivial logical operator}, we mean 
a logical operator that is not a stabilizer.
Up to the stabilizer group $S$, the set of logical operators, denoted $N(S)$ 
(for normalizer), is a group that has the same structure 
as the $k$-qubit Pauli group. More precisely, the quotient $N(S)/S$ is generated 
by $2k$ operators $\bar X_1, \bar Z_1 \dots, \bar X_k, \bar Z_k$ which satisfy 
the same relations as the $k$-qubit Pauli operators $[\bar P_i, \bar Q_j] = [P_i, Q_j]$. 
A family which satisfies these relations is called the {\em symplectic basis} of the 
logical operators.

\medskip
The {\em minimum distance} $d$ of a stabilizer code is defined to be the 
minimum weight of a non-trivial logical error.
It is a proxy indicator for the error-correction capability of the code. 
When $d$ is known, it is added to the parameters of the code denoted as 
$[[n, k, d]]$.
For CSS codes, the minimum distance is reached either by an error 
$E_Z \in \{I, Z\}^{\otimes n}$ or by an error $E_X \in \{I, X\}^{\otimes n}$. 
One can thus obtain the minimum  distance as $d = \min\{d_X, d_Z\}$ 
where $d_Z$ is the minimum weight of a non-trivial logical error 
$E_Z \in \{I, Z\}^{\otimes n}$ and $d_X$ is the 
minimum weight of a non-trivial logical error $E_X \in \{I, X\}^{\otimes n}$.

\section{Definition and properties of generalized surface codes} \label{section:codes}

Stabilizer codes and CSS codes can be considered as a quantum generalization of
classical linear codes. The main obstacle which arises when we try to 'quantize' a classical 
code construction is the constraint that stabilizers must commute.
In Kitaev's construction, qubits are placed on the edges of a cellulation of a 
surface, $X$-type stabilizers correspond to vertices and $Z$-type stabilizers 
correspond to faces. 
The commutation relations are then an immediate consequence of basic homological properties
(the boundary of a boundary is trivial). The homology of the surface is also a 
crucial tool to compute the parameters of Kitaev's code.

In this section, we first introduce the surfaces that support our code construction.
Then, we define generalized surface codes and we state our main result which provides
a full description of the parameters of these codes. The validity of the construction as well as the proof 
of our main result is delayed to Section~\ref{section:homology}  through
the study of relative homology.

\subsection{Surface with open and closed boundaries}

The goal of this section is to define an appropriate notion of surface in order 
to incorporate all extensions of Kitaev's code.
Starting from closed surfaces, we bring two modifications. First, we authorize
surfaces with boundaries. Then, each edge on the boundary will be declared 
either open or closed. We choose the term open or closed for its topological
connotation. A surface with only closed boundaries is a closed surface.
Those boundaries are sometimes called rough and smooth in quantum information.
This notion of surface will allow us to consider a general construction of surface 
codes which encompasses both Kitaev's original construction \cite{Ki03},
its generalizations \cite{BK98, FM01} and extends it to any surface 
punctured with mixed holes that have open as well as closed edges on their boundaries, 
as one can see in Figure~\ref{fig:surfaces}.

\begin{figure}[h]
\begin{center}
\includegraphics[scale=.7]{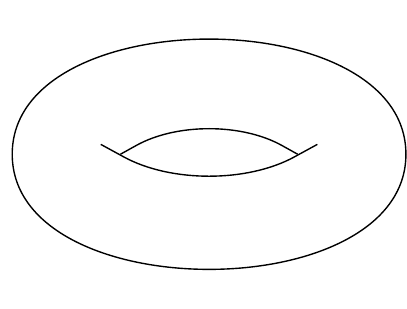}
\hspace{.2cm}
\includegraphics[scale=.6]{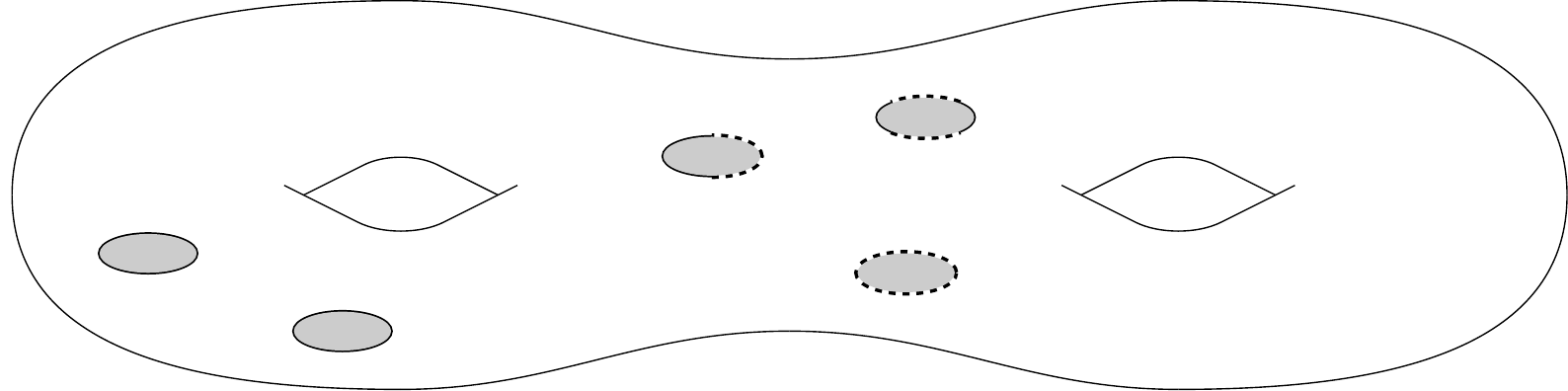}
\caption{Left: A closed surface of genus 1. Right: 
A surface of genus 2 with boundaries. The surface is punctured with 5 holes resented by 
dark disks. The boundary of these holes can be open (dotted lines)
or closed (continuous lines).
Here 2 holes are closed and 1 hole is open. The 2 remaining holes have both open
and closed boundaries.}
\label{fig:surfaces}
\end{center}
\end{figure}

\medskip
Since it is only the combinatorial structure of the surface that plays a role in
the definition of topological codes, we consider surfaces constructed
by gluing a finite set of faces together along their edges in a coherent
way. Formally, a {\em combinatorial surface} or a simply a {\em surface} 
is a cellulation of a compact 2-manifold $S$ (possibly with boundaries).
In other words, it is a triple $(V, E, F)$ where $G = (V, E)$ is a finite graph embedded 
on the surface such that each connected component of $S \backslash G$
is a disk which is a face $f \in F$ of the cellulation.
We often identify a face $f$ with the set of edges on its boundary, 
\emph{i.e.} we regard a face $f$ as a subset $f \subset E$. Topologically a 
face is a disk and its boundary is a set of edges that 
form a cycle in the graph $(V, E)$.
We assume that $G$ is embedded without overlapping edges or multiple edges.
We also suppose that no edge belongs to the same face twice and that two faces 
share at most one edge.

\medskip
Now that we have a surface $G=(V, E, F)$, let us define its {\em boundary}.
An edge of $G$ is defined to be a boundary if it is incident to a single face of $G$. 
The corresponding faces are called boundary faces. The endpoints of boundary 
edges are the boundary vertices.
We denote the sets of boundary vertices, edges and faces, 
respectively by $\partial V, \partial E$ and $\partial F$.

\medskip
Boundary elements will be either {\em open} or {\em closed}.
First, a subset of edges living on the boundary of the tiling is 
declared to be open. Once open edges are defined, a vertex 
of the boundary is said to be open if it is incident to an open 
edge. Analogously an open face is a face of the boundary 
containing an open edge.
An edge, respectively a vertex or a face of the boundary that 
is not open is said to be closed. This defines a partition of the 
boundary 
$\partial V = \partial_O V \sqcup \partial_C V$,
$\partial E = \partial_O E \sqcup \partial_C E$,
and 
$\partial F = \partial_O F \sqcup \partial_C F$,
where $\partial_O$ denotes the open subset and 
$\partial_C$ denotes the closed subset.

\medskip
The set of non-open vertices, edges and faces are denoted
respectively by $\mathring{V} = V \backslash \partial_O V$, 
$\mathring{E} = E \backslash \partial_O E$ and
$\mathring{F} = F \backslash \partial_O F$.
By analogy with topology, one may be tempted to define the 
set $\mathring{X}$ as $X \backslash \partial X$. Our definition
makes the statements of our results simpler and emphasizes the similarities 
with standard results in homology.

\subsection{Generalized surface codes}

In this section we provide a unified definition for surface codes defined on
surfaces with or without boundaries.

In order to construct a quantum error-correcting code from a surface $G=(V, E, F)$, 
we place a qubit on each non-open edge of $G$. This leads to a global state
living in the Hilbert space $\h = \otimes_{e \in \mathring{E}} \C^2$.
Denote by $X_e$, respectively $Z_e$, the Pauli operator acting as $X$, respectively
$Z$, on the qubit indexed by $e$ and which is the identity on the other qubits.
 
\begin{defi} \label{defi:surface_codes}
The {\em surface code} associated with the surface $G=(V, E, F)$ is defined to be the 
stabilizer code of stabilizer group $S = \langle X_v, Z_f \ | \ v \in \mathring{V}, f \in F \rangle$,
where 
$$
X_v = \prod_{\substack{e \in \mathring{E} \\ v \in e}} X_e \quad \text{ and } \quad Z_f = \prod_{\substack{e \in \mathring{E} \\ e \in f}} Z_e.
$$
\end{defi}

In other words, this surface code is the ground space of the Hamiltonian 
$$
H = - \sum_{v \in \mathring V} X_v - \sum_{f \in F} Z_f \cdot
$$
The commutation between the operators $X_v$ and $Z_f$, 
for all $v \in \mathring V$ and for all $f \in F$, is ensured 
by Lemma~\ref{lemma:boundary_composition}.

\medskip
When the surface $G$ has no boundaries, this definition coincides with 
Kitaev's original construction \cite{Ki03}. When the surface has only closed boundaries
it is Freedman and Meyer's generalization \cite{FM01}. Finally, 
we recover Bravyi and Kitaev's construction \cite{BK98} when the surface is a sphere punctured 
with a single hole.

\medskip
Our first objective is to establish a closed formula for the parameters of 
generalized surface codes. Let us recall the key ingredients in the case of closed connected surfaces.
\begin{itemize}
\item The rank of the $Z$-stabilizer group 
$S_Z = \lbrace Z_f \ | \ f \in F \rbrace$ 
is given by $|F| - 1$.
\item A $Z$-error $E_Z \in \{I, Z\}^{\otimes n}$ has trivial syndrome if and only if
its support is a cycle and it is a stabilizer if and only if this cycle is
homologically trivial. This proves that the minimum distance $d_Z$ is 
the minimum length of a cycle with non-trivial homology.
\item Replacing the graph $G$ by its dual exchanges the role of $X$ and $Z$,
proving that $d_X$ is the minimum length of a cycle of the dual graph $G^*$ 
with non-trivial homology and that the $X$-stabilizer group has rank $|F^*|-1 = |V|-1$
\end{itemize}
Denote by $\chi(G)$ the Euler-Poincaré characteristic of the surface.
Based on the first and the third items, $k = 2 - \chi(G)$ which is equal to $2g$ when the
surface is orientable and $g$ when it is not orientable.
The minimum distance is given by the last two items as the minimum length of 
a non-trivial cycle of $G$ or its dual $G^*$.

\medskip
In the presence of open and closed boundaries, these 3 items are altered. First, the rank
of the stabilizer group depends on the boundaries in a non-trivial way 
(See Theorem~\ref{theo:code_parameters} below). Second, one needs an appropriate notion
of homology that reproduces the behaviour of errors and syndrome measurements.
Lastly, one also needs to define a dual graph $G^*$ in such a way that graph duality coincides exactly 
with the duality between $X$ and $Z$-errors.

\subsection{Statement of the main result} \label{section:main_result}

An important part of the present article is devoted to the study of the homology
of surfaces with open and closed boundaries and to the construction of an appropriate
notion of dual for these surfaces.
In order to state our main result, we first provide a rough definition of these notions. 
Rigorous definitions of these objects will be given in Section~\ref{section:homology}.

\begin{figure}
\begin{center}
\includegraphics[scale=.35]{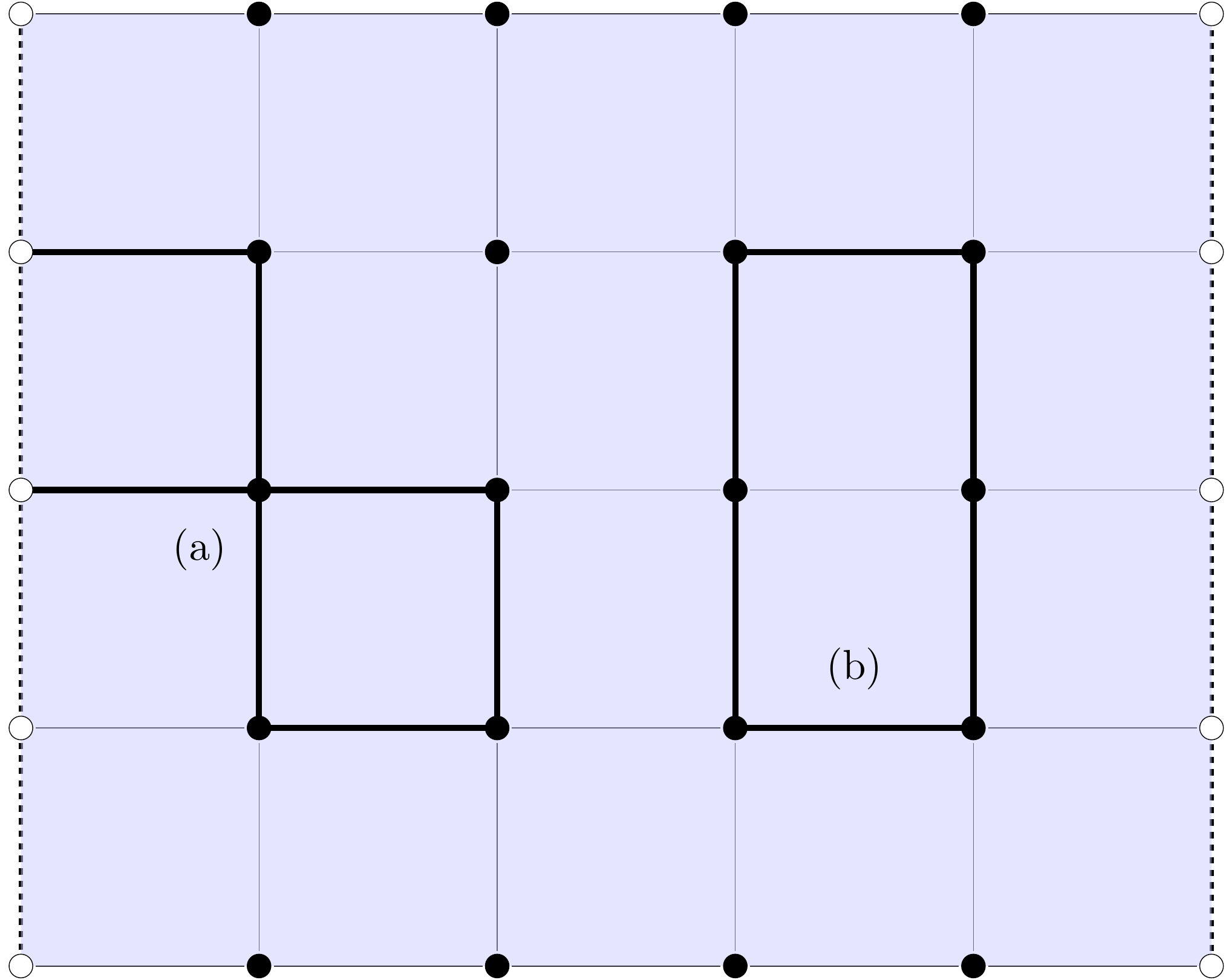}
\hspace{.2cm}
\includegraphics[scale=.3]{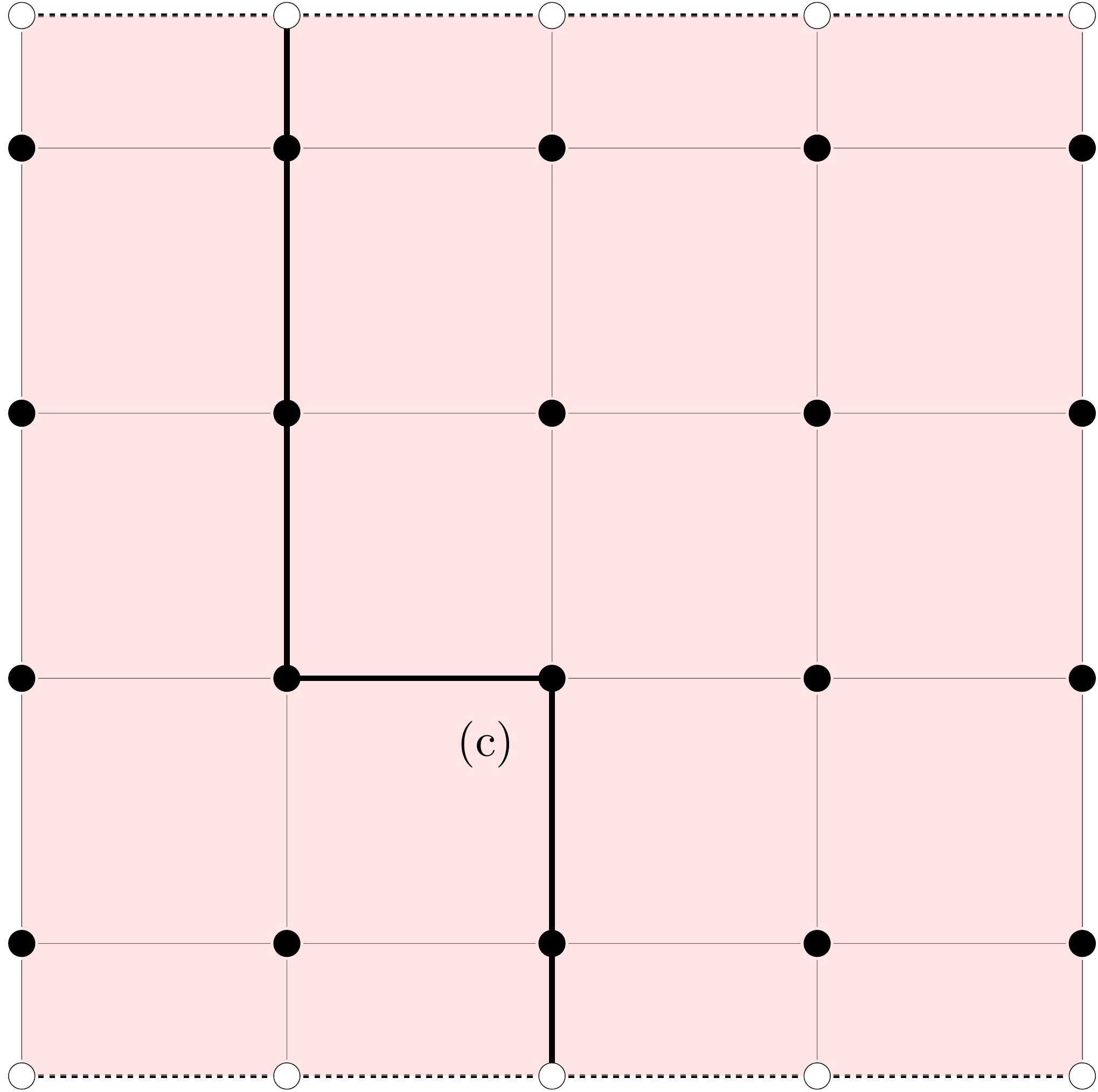}
\caption{A surface with 2 open boundaries and 2 closed boundaries and its dual.
Open edges are represented by dotted lines and open vertices by white nodes. 
Duality switches open and closed boundaries.
Three cycles are depicted. The cycles (a) and (b) are trivial relative cycles whereas the 
relative cycle (c) is non-trivial.}
\label{fig:rect_open}
\end{center}
\end{figure}

\medskip
In order to describe the minimum distance of the code, we need to introduce a 
generalization of the notion of cycle for surfaces with boundaries.
Following Bravyi and Kitaev, we define a {\em relative cycle} of a surface with
boundaries as a set of non-open edges which meet each non-open vertex an even number of times.
Three relative cycles are shown in Figure~\ref{fig:rect_open}.
For instance, a path connecting two open vertices, like (a) and (c) in the figure, 
is a relative cycle though it is not a standard cycle.
The support of an error $E_Z$ with syndrome zero is a relative cycle since it can only
be detected by measuring the operators $X_v$ associated with the non-open 
vertices $v \in \mathring V$. 

\medskip
A relative cycle is said to be {\em homologically trivial} or simply {\em trivial} if it is
the boundary of a subset of faces of the surface. Examples of trivial and non-trivial
cycles are depicted in Figure~\ref{fig:rect_open}.
By definition of generalized surface codes, 
a trivial cycle induces an error $E_Z$ which is a stabilizer (again this cycle is the support of the error).

\medskip
We will also have to extend the definition of the dual to surfaces with open and closed 
boundaries. 
Inspired by the standard example of the square lattice with open 
and closed boundaries depicted in Figure~\ref{fig:rect_open}~\cite{BK98}, the dual graph will 
be defined in such a way that open and closed boundaries are switched under duality. 
This duality will be introduced and motivated in section~\ref{section:dual}.

\medskip
Our first result is a closed formula for the parameters of generalized surface codes
defined on any surface, orientable or not, with boundaries or not, and where boundaries
can be either open or closed.
We use the notation $\kappa_{\overline{\partial_C E}}(G)$ for the number of connected 
components of $G$ containing no closed boundary edge $e \in \partial_C E$ and the notation 
$\kappa_{\overline{\partial_O V}}(G)$ for the number of connected components 
of $G$ containing no open vertex $v \in \partial_O V$.
\begin{theo} \label{theo:code_parameters}
The parameters $[[n, k, d]]$ of the generalized surface code associated with a surface $G=(V, E, F)$ are
\begin{itemize}
\item $n = |\mathring{E}|$, 
\item $k = - |\mathring{V}| + |\mathring{E}| - |F| + \kappa_{\overline{\partial_O V}}(G)  + \kappa_{\overline{\partial_C E}}(G),$
\item $d = \min \{d_Z, d_X\}$ where $d_Z$ is the minimum length of a non-trivial relative cycle of $G$
and $d_X$ is the minimum length of a non-trivial relative cycle of $G^*$.
\end{itemize}
\end{theo}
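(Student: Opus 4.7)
The plan is to use the CSS structure to split the computation into three independent pieces: $n = |\mathring E|$ is immediate from the definition, $k = n - \rank S_X - \rank S_Z$ follows from counting linear relations among the generators of each half, and $d = \min(d_X, d_Z)$ follows by identifying $d_Z$ (respectively $d_X$) with the shortest non-trivial relative cycle of $G$ (respectively $G^*$).

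For $\rank S_X$, I would describe the $\F_2$-space of relations $\prod_{v \in S} X_v = I$ with $S \subset \mathring V$. Encoding $S$ as a function $\tau : V \to \F_2$ supported on $\mathring V$, such a relation says $\tau(u) + \tau(w) = 0$ for every $e = \{u,w\} \in \mathring E$, so $\tau$ is constant along each connected component of the subgraph $(V, \mathring E)$ and must vanish on any component containing an open vertex. Observing that open edges connect only open vertices, the connected components of $G$ containing no open vertex coincide with the components of $(V, \mathring E)$ containing no open vertex, which yields $\rank S_X = |\mathring V| - \kappa_{\overline{\partial_O V}}(G)$. The analogous analysis of $\prod_{f \in T} Z_f = I$ for $\rank S_Z$ requires each $e \in \mathring E$ to lie in an even number of faces of $T$. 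Since open edges are a subset of the boundary edges of the tiling, every $e \in \mathring E$ is either internal (shared between two faces, linking them in $T$) or a closed boundary edge (in one face, forcing that face out of $T$). By 2-manifold connectivity the faces of one connected component of $G$ are joined by internal edges, so $T$ supplies one free bit per connected component, and that bit is killed as soon as the component meets $\partial_C E$. This gives $\rank S_Z = |F| - \kappa_{\overline{\partial_C E}}(G)$, and the formula for $k$ follows by addition.

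For the distance, a $Z$-error with support $A \subset \mathring E$ has trivial $X$-syndrome iff $A$ meets every $v \in \mathring V$ an even number of times, i.e.\ $A$ is a relative cycle, and is a stabilizer iff $A$ arises as the boundary of a face subset restricted to $\mathring E$, i.e.\ a trivial relative cycle; hence $d_Z$ is the minimum weight of a non-trivial relative cycle of $G$. Applying the same reasoning on $G^*$, whose code is related to that of $G$ by swapping $X \leftrightarrow Z$ and open $\leftrightarrow$ closed boundaries (as set up in Section~\ref{section:dual}), gives $d_X$.

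The main obstacle, and the reason a full section is devoted to it, is packaging everything above inside a single relative (co)homology framework. One must in particular check that the ad-hoc counts really correspond to components of $G$ itself rather than to components of the auxiliary subgraphs $(V, \mathring E)$ or the face-adjacency graph, and that the dual $G^*$ of a surface with mixed boundaries can be defined so that duality intertwines the $X$- and $Z$-halves of the code exactly; both points are exactly what Section~\ref{section:homology} and Section~\ref{section:dual} are built to supply, after which the present theorem becomes a bookkeeping corollary.
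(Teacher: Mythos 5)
Your proposal is correct and follows the same overall architecture as the paper: $n$ is immediate, $k$ is obtained by counting the ranks of the two CSS halves, and $d$ comes from identifying trivial-syndrome $Z$-errors with relative cycles, stabilizers with homologically trivial ones, and transporting the $X$-side to $G^*$ via the boundary-swapping duality of Section~\ref{section:dual}. Your computation of $\rank S_Z = |F| - \kappa_{\overline{\partial_C E}}(G)$ is essentially identical to the paper's computation of $\dim \im \partial_2$ in Proposition~\ref{prop:dim_H_1} (both rest on connectivity of the face-adjacency graph within a component and on closed boundary edges killing the one free relation), and your distance argument matches Corollary~\ref{coro:d}. The one genuinely different ingredient is the $X$-side: the paper computes $\dim \Ker \partial_1$, i.e.\ the cycle space of the open graph $(V,\mathring E)$, directly via the double-cover construction of Proposition~\ref{prop:cycle_code_boundaries}, whereas you compute the corank of the same map by describing the relation space $\{S \subseteq \mathring V : \prod_{v\in S} X_v = I\}$ as locally constant $\F_2$-functions on $(V,\mathring E)$ vanishing on components that meet $\partial_O V$. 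By rank--nullity these give the same count, and your observation that components of $(V,\mathring E)$ without open vertices coincide with components of $G$ without open vertices (because open edges join only open vertices) is exactly the identification needed. Your route is more elementary and pleasingly symmetric with the face-side count; the paper's double cover buys a direct description of the cycle space itself, which is the space of trivial-syndrome $Z$-errors and is reused conceptually in the homological framework. Both you and the paper defer the verification that $G^*$ is a bona fide surface whose code is the $X\leftrightarrow Z$ mirror of that of $G$ to the explicit construction and Table~\ref{tab:correspondences_duality}, so no gap there relative to the paper's own level of detail.
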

The first result is obvious.
Note that in the last item both the dual graph $G^*$ and the notion of cycle and trivial cycle 
are generalized to the case of surfaces with open and closed boundaries.
This theorem will be proven through the study of the homology group of surface 
with open and closed boundaries,
which is the focus of Section~\ref{section:homology}. 
It is the combined conclusion of Corollary~\ref{coro:k} and Corollary~\ref{coro:d}.

\medskip
{\bf Remark:} In order to avoid surface codes with minimum distance $d=1$, one can assume
that any edge whose both endpoints are open is an open edge. See the remark at the end of
Section~\ref{section:dual}.

\begin{figure}
\centering
\includegraphics[scale=.3]{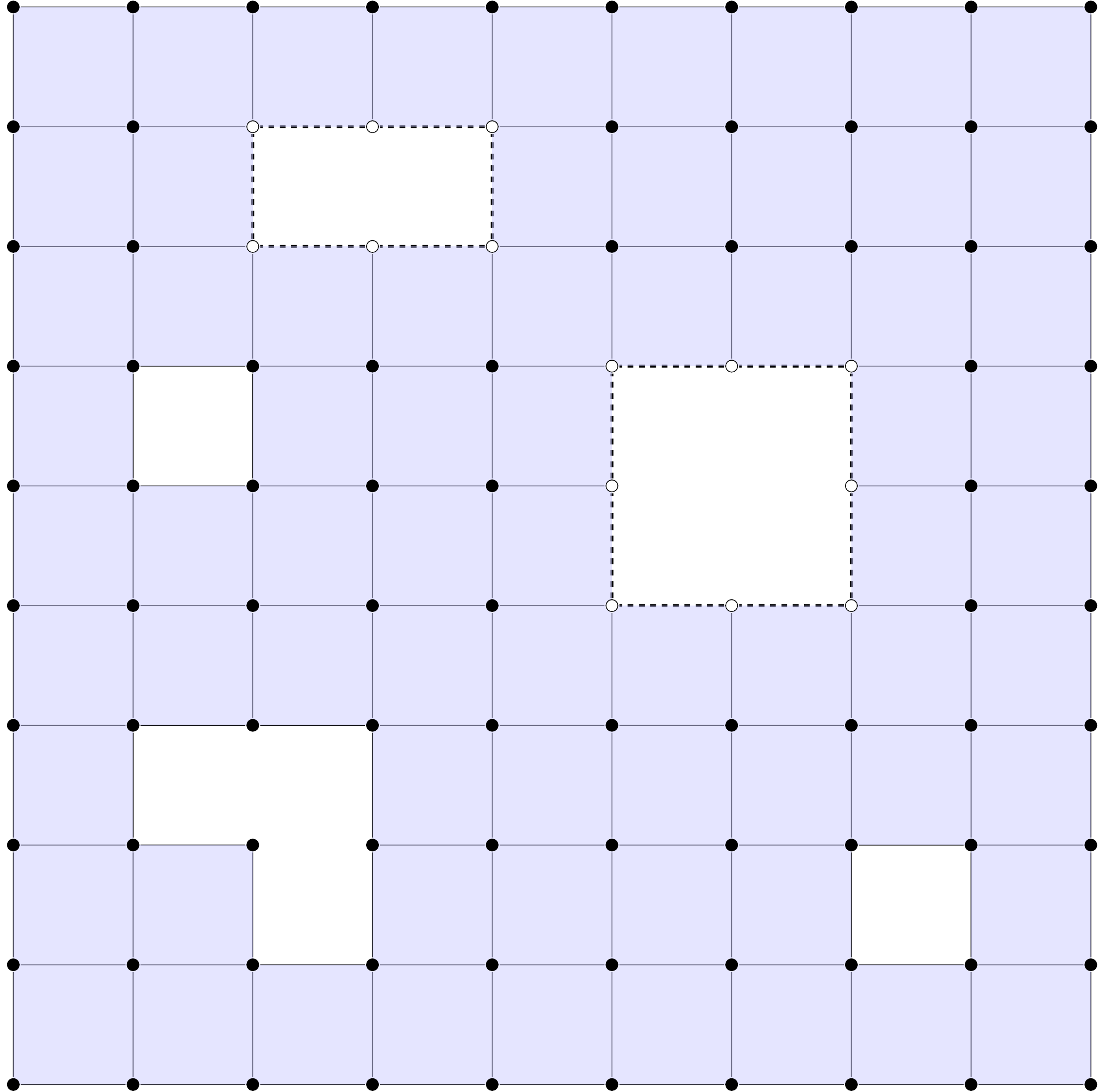}
\includegraphics[scale=.4]{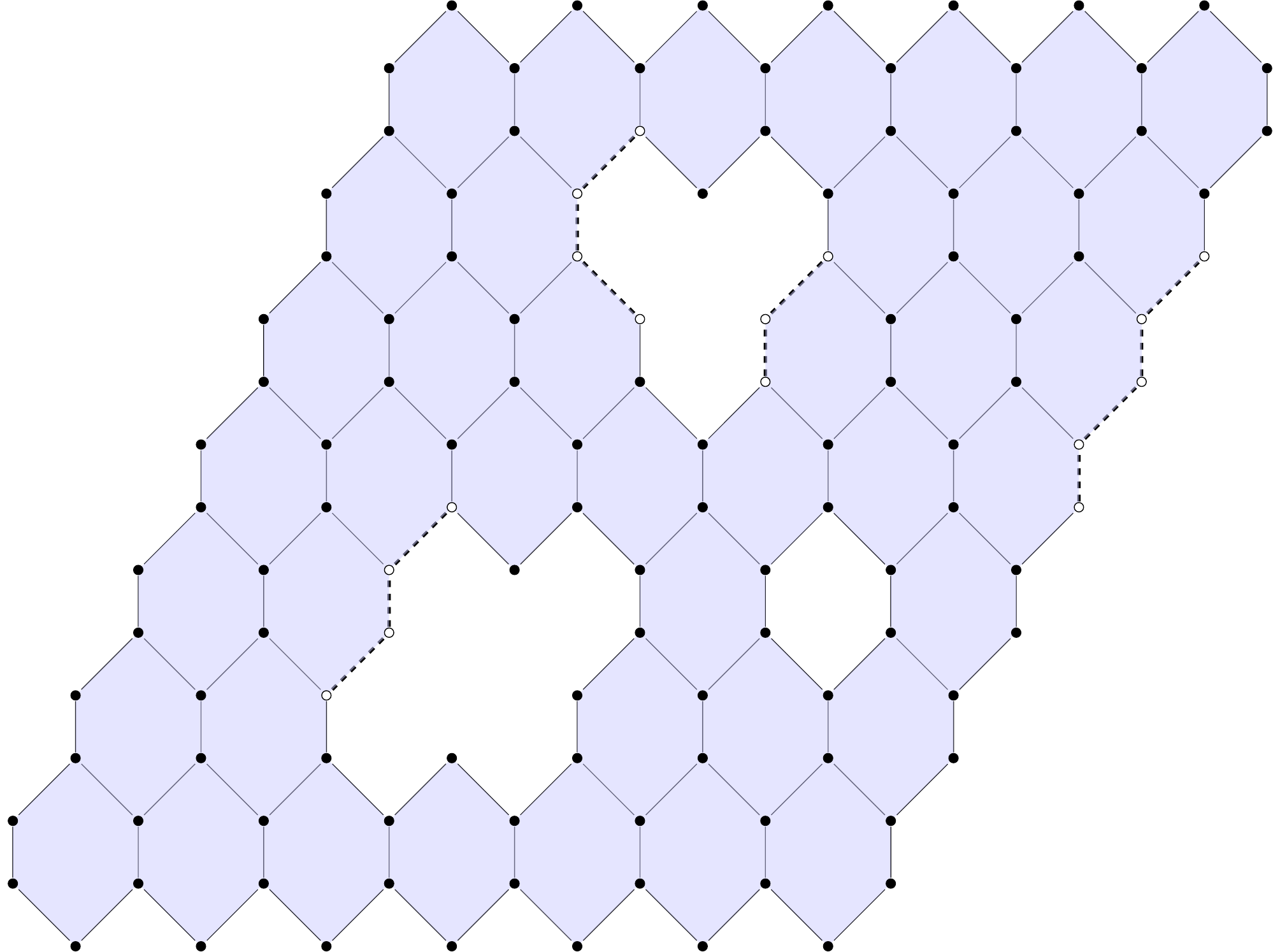}
\caption{Left: A surface of genus 0 with 6 holes encoding k=4 qubits. Right: A surface with partially open holes encoding 6 qubits. Dashed edges represent open edges.}
\label{fig:open_closed_holes}
\end{figure}

\subsection{Applications to surfaces with uniform boundaries}

As a first application, we find the parameters of surface codes defined over
a surface punctured with open holes and closed holes.

\begin{coro} \label{cor:k_uniform_surfaces}
Let  $G$ be a connected surface with $b_c$ closed-boundary holes  and $b_o$ open-boundary holes.
 The number of logical qubits encoded in the corresponding surface code is
$$
\begin{cases}
k = 2g + \min\{b_c -1, 0\} + \min\{b_o-1, 0\} \quad \text{ if $G$ is orientable},\\
k = g + \min\{b_c -1, 0\} + \min\{b_o-1, 0\} \quad \text{ if $G$ is not orientable.}
\end{cases}
$$
\end{coro}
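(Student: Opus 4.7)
The plan is to deduce the corollary directly from Theorem~\ref{theo:code_parameters} by converting the cell-count expression for $k$ into topological data of the underlying compact surface.

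First I would substitute $|\mathring V| = |V| - |\partial_O V|$ and $|\mathring E| = |E| - |\partial_O E|$ into the formula of Theorem~\ref{theo:code_parameters}, giving
\[
k = -\chi(G) + \bigl(|\partial_O V| - |\partial_O E|\bigr) + \kappa_{\overline{\partial_O V}}(G) + \kappa_{\overline{\partial_C E}}(G),
\]
where $\chi(G) = |V|-|E|+|F|$ is the Euler--Poincaré characteristic of the cellulation.

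Next I would evaluate the open-boundary correction $|\partial_O V| - |\partial_O E|$. By hypothesis every edge on an open-boundary hole is declared open, so the corresponding vertices are open as well. Since the topological boundary of a hole is a circle and its discretization is a cycle in $G$, each such hole contributes equally many open vertices and open edges. Summing over the $b_o$ open holes yields $|\partial_O V| = |\partial_O E|$, and the correction vanishes. Moreover, because $G$ is connected, $\kappa_{\overline{\partial_O V}}(G)$ equals $1$ when $b_o = 0$ and $0$ otherwise; and analogously $\kappa_{\overline{\partial_C E}}(G)$ equals $1$ when $b_c = 0$ and $0$ otherwise.

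Finally, I would invoke the classical Euler characteristic of a compact surface with $b = b_o + b_c$ boundary circles, namely $\chi(G) = 2 - 2g - b$ in the orientable case and $\chi(G) = 2 - g - b$ in the non-orientable case, and carry out a short case split on whether each of $b_o$ and $b_c$ is zero to collect the closed form claimed in the corollary. The only step requiring real care is the combinatorial bookkeeping around open holes: one has to check that distinct holes have disjoint boundary cycles and that each such boundary is a genuine simple cycle with matching vertex and edge counts, which follows from the standing assumptions that faces are disks, that no edge is repeated on a face, and that two faces share at most one edge. Once this is settled, what remains is a one-line Euler characteristic computation.
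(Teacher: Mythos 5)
Your proposal is correct and follows essentially the same route as the paper: both reduce to Theorem~\ref{theo:code_parameters}, use that each fully open hole boundary is a cycle so $|\partial_O V| = |\partial_O E|$, evaluate the two $\kappa$ terms by connectedness of $G$, and finish with $\chi(G) = 2 - 2g - b$; the paper merely organizes the bookkeeping by comparing $G$ with the fully closed surface $\bar G$, whereas you substitute directly into the formula for $k$. One remark: your computation yields $k = 2g + b - 2 + \delta_{b_c=0} + \delta_{b_o=0}$, i.e.\ the formula with $\max\{b_c-1,0\} + \max\{b_o-1,0\}$, which is also what the paper's own proof and its worked example ($k=4$ for $g=0$, $b_c=4$, $b_o=2$) produce --- the $\min$ in the stated corollary is evidently a typo, and your argument proves the intended statement.
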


As an illustration, a surface of genus 0 with 4 closed holes (the external boundary is a closed hole)
and 2 open holes encoding $k=4$ qubits is represented in Figure~\ref{fig:open_closed_holes}.

\begin{proof}
If $G$ is an orientable surface of genus $g$ with no boundary, then 
its Euler-Poincaré characteristic $\chi(G) = |V| - |E| + |F|$ equals $2 - 2g$. 
For a non-orientable surface, we have $\chi(G) = 2 - g$. 
The case of closed surface is thus a straightforward application of 
Theorem~\ref{theo:code_parameters}.
If $G$ is connected and contains $b_c$ closed boundaries, then 
in the orientable case, $\chi(G) = 2g-b_c$, which in turn yields
$k = 2g - b_c - 1$ according to Theorem~\ref{theo:code_parameters}.
In the non-orientable case, $k = g-b_c-1$ again according to Theorem~\ref{theo:code_parameters}.

The case of a surface with some open boundaries is less standard.
Assume that $b_o + b_c > 0$.
Denote $\bar G$ the surface obtained from $G$ by closing all the
boundaries. Any open edge is now declared to be closed. This allows us to apply the previous
result to $\bar G$. Let us now compare $k(G)$ and $k(\bar G)$:
$$
k(G) - k(\bar G) = |V| - |\mathring{V}| +  |\mathring{E}| - |E| + \kappa_{\overline{\partial_O V}}(G)  + \kappa_{\overline{\partial_C E}}(G) - \kappa_{\overline{\partial_O V}}(\bar G)  - \kappa_{\overline{\partial_C E}}(\bar G)
$$
Any boundary is either an open cycle or a closed cycle. This implies that the number of 
open edges $|E| - |\mathring{E}|$ is equal to the number of open vertices $|V| - |\mathring{V}|$.

The terms corresponding to closed boundaries add up to
$$
\kappa_{\overline{\partial_C E}}(G) - \kappa_{\overline{\partial_C E}}(\bar G) =
\begin{cases}
1
\quad &\text{ if } b_c = 0\\
0
\quad &\text{ if } b_c > 0\\
\end{cases}
$$
and the contribution of open boundaries is
$$
\kappa_{\overline{\partial_O V}}(G) - \kappa_{\overline{\partial_O V}}(\bar G) =
\begin{cases}
0
\quad &\text{ if } b_o = 0\\
-1
\quad &\text{ if } b_o > 0\\
\end{cases}
$$
Overall, we obtain
$
k(G) = k(\bar G) + \delta_{b_c = 0} - \delta_{b_o > 0}.
$
which is $2g + b - 1 + \delta_{b_c = 0} - \delta_{b_o > 0}$ or 
$g + b - 1 + \delta_{b_c = 0} - \delta_{b_o > 0}$ depending on the 
orientability of the surface. This proves that the result holds.
\end{proof}

\begin{figure}[h]
\begin{center}
\includegraphics[scale=1]{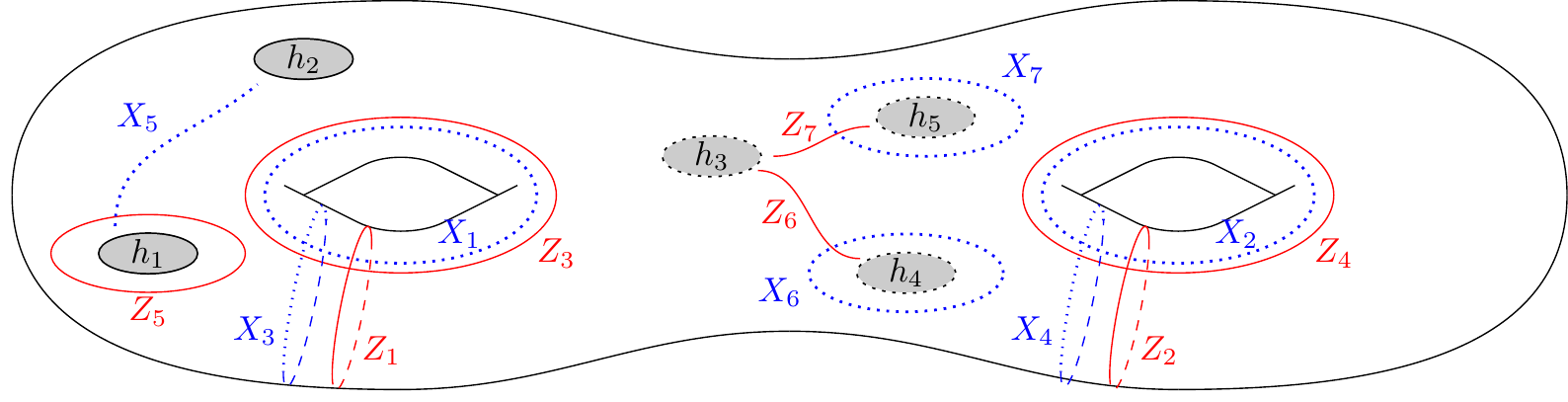}
\caption{A symplectic basis of the set of logical operators for an orientable surface of genus 2
punctured with 5 holes.}
\label{fig:logical_punctured_2torus}
\end{center}
\end{figure} 

For instance, Figure~\ref{fig:logical_punctured_2torus} represents an orientable surface of 
genus 2 punctured by 2 closed holes $h_1$ and $h_2$ and 3 open holes $h_3, h_4$ and $h_5$, 
resulting in a code that encodes $k=4+1+2 = 7$ logical qubits.
A symplectic basis $\bar X_1, \bar Z_1, \dots, \bar X_k, \bar Z_k$ 
of logical operators is shown in the above figure.
Let us detail the construction of this symplectic basis.
As usual, for each handle we define 2 pairs of logical $\bar X_i, \bar Z_i$ of logical operators.
When the surface is not orientable, it is a connected sum of $g$ 
projective planes and we have only one logical of each type per projective
plane.
It remains to define the logical operators associated with the $b_c$ closed holes
and the $b_o$ open holes.
Consider the closed holes $h_1, \dots, h_{b_c}$. The last hole $h_{b_c}$ 
(the top closed hole $h_2$ in Figure~\ref{fig:logical_punctured_2torus}) will play 
a special role.
For the $b_c-1$ holes $h_1, \dots, h_{b_c-1}$, define a $Z$-logical
operator $\bar Z_i$ whose support is a loop $\alpha_i$ around the cycle $h_i$ for $i=1, \dots, b_c - 1$. 
Then, take a family of paths $\beta_1, \dots, \beta_{b_c-1}$ in the dual graph $G^*$ such that 
$\beta_i$ connects the hole $h_{b_c}$ to the hole $h_i$ and 
$\beta_i \cap \alpha_j = \delta_{i, j}$ for all $i, j$. 
By $\beta_i \cap \alpha_j = \delta_{i, j}$,
we mean that the only path $\alpha_j$ which shares an edge with $\beta_i$
(up to duality) is $\alpha_i$, and moreover, these paths have exactly one common edge.
Each of these $b_c-1$ paths $\beta_i$, define a logical operator $\bar X_i$ whose support 
is $\beta_i$. The assumption $\beta_i \cap \alpha_j = \delta_{i, j}$ ensures the 
commutation relations required for a symplectic basis. 
Then, we construct $b_o - 1$ pairs of logical operators based on the $b_o$ open holes.
This can be done by the same procedure in the dual graph.

\subsection{Applications to surfaces with mixed boundaries}

We now consider codes based on surface punctured with holes which 
may have mixed (open and closed) boundaries.

\begin{coro} \label{coro:k_mixed_surface_codes}
If $G$ is a connected surface of genus $g$ with $b$ holes 
that contains $m > 1$ non-cyclic disjoint open paths
then the number of logical qubits of the corresponding surface code is
$$
\begin{cases}
k = 2g + b + m -2 \text{ if $G$ is orientable},\\
k = g + b + m -2 \quad \text{ if $G$ is not orientable.}
\end{cases}
$$
\end{coro}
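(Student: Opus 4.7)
The plan is to apply Theorem~\ref{theo:code_parameters} directly and reduce the computation of $k$ to counting open boundary vertices and edges together with the Euler characteristic of $G$. The formula to evaluate is
\begin{equation*}
k = -|\mathring V| + |\mathring E| - |F| + \kappa_{\overline{\partial_O V}}(G) + \kappa_{\overline{\partial_C E}}(G).
\end{equation*}

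First I would argue that both $\kappa$ terms vanish. Since $G$ is connected and $m \geq 1$ forces the existence of at least one open vertex, $\kappa_{\overline{\partial_O V}}(G) = 0$. For $\kappa_{\overline{\partial_C E}}(G)$, observe that each of the $m$ non-cyclic open paths has two distinct endpoints on some boundary cycle of a hole, and the paths are vertex-disjoint. Their complement inside the union of the $b$ boundary cycles therefore contains at least one edge, which must be closed by construction; hence $\kappa_{\overline{\partial_C E}}(G) = 0$ as well.

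With both $\kappa$ terms gone, rewriting $\mathring V = V \setminus \partial_O V$ and $\mathring E = E \setminus \partial_O E$ gives
\begin{equation*}
k = -\chi(G) + |\partial_O V| - |\partial_O E|,
\end{equation*}
where $\chi(G) = |V| - |E| + |F|$. For a connected surface of genus $g$ with $b$ boundary components, $\chi(G)$ equals $2 - 2g - b$ in the orientable case and $2 - g - b$ otherwise.

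The only substantive combinatorial step is the identity $|\partial_O V| - |\partial_O E| = m$. Since the $m$ open paths are pairwise vertex-disjoint and non-cyclic, a path with $e_i$ edges contributes exactly $e_i + 1$ open vertices, and every open vertex lies on some open edge; summing over paths yields $|\partial_O V| - |\partial_O E| = \sum_{i=1}^m (e_i+1) - \sum_{i=1}^m e_i = m$. Plugging this and the value of $\chi(G)$ into the reduced formula for $k$ gives the two announced expressions. I expect the only delicate point to be verifying that the two $\kappa$ corrections vanish under the mixed-boundary hypotheses; once that is settled, the rest is a short exercise in Euler-characteristic bookkeeping, and no deeper input from the relative homology of Section~\ref{section:homology} is needed beyond what Theorem~\ref{theo:code_parameters} already packages.
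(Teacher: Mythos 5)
Your proposal is correct and follows essentially the same route as the paper: both invoke Theorem~\ref{theo:code_parameters}, observe that the two $\kappa$ terms vanish for a connected surface with mixed boundaries, and reduce everything to the identity that each non-cyclic open path contributes exactly one more open vertex than open edge, so that $|\partial_O V| - |\partial_O E| = m$ (totally open holes, being cycles, contribute equally to both counts and drop out). The only cosmetic difference is that you substitute $\chi(G) = 2 - 2g - b$ directly, whereas the paper reaches the same bookkeeping by comparing $G$ to the fully closed surface $\bar G$ and reusing Corollary~\ref{cor:k_uniform_surfaces}.
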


By non-cyclic path, we simply mean a path that is not a cycle. Since these
paths live on the boundary of a hole, the only kind of cyclic open path that may
appear in this context is a totally open hole.

A similar formula was given for the number of logical qubits of colour codes 
\cite{KYP15}.

\begin{proof}
We proceed as in the proof of Corollary~\ref{cor:k_uniform_surfaces}. 
In order to determine the number of logical qubits $k(G)$ of the code based on the 
surface $G$, we introduce the surface $\bar G$ which is obtained from $G$ by closing
all its boundaries. The open subsets $\mathring V$ and $\mathring E$ of the surface 
$\bar G$ are then trivial.
We already know that $k(\bar G) = 2g + b$ or $g+b$, let us relate $k(G)$ and 
$k(\bar G)$.

Assume that $G$ is punctured by $b$ holes indexed by $i=1, \dots, b$.
The perimeter of the $i$-th hole is a cycle $\gamma_i$ of length $\ell_i$. If some of
its boundaries are open, it is partitioned as an alternate sequence of $2m_i$ open and 
closed path $\gamma_i = p_{i,1} \cup p_{i,2} \cup \dots \cup p_{i,2m_i}$ where odd 
paths are open and even paths are closed.
When a hole $\gamma_i$ is totally closed or totally open, its partition is trivial and we set 
$m_i = 0$.
The case of a surface containing only open boundaries was already considered in 
Corollary~\ref{cor:k_uniform_surfaces}. We focus on surfaces $G$ that contain
both types of boundaries.
By Theorem~\ref{theo:code_parameters}, we have
$$
k(G) - k(\bar G) = |V| - |\mathring{V}| +  |\mathring{E}| - |E| + \kappa_{\overline{\partial_O V}}(G)  + \kappa_{\overline{\partial_C E}}(G) - \kappa_{\overline{\partial_O V}}(\bar G)  - \kappa_{\overline{\partial_C E}}(\bar G)
$$
By definition, $\kappa_{\overline{\partial_O V}}(\bar G) = 1$ and $\kappa_{\overline{\partial_C E}}(\bar G) = 0$
whereas $\kappa_{\overline{\partial_O V}}(G) = 0$ and $\kappa_{\overline{\partial_C E}}(G) = 0$
which produces
\begin{align*}
k(G) - k(\bar G) 
& = |V| - |\mathring{V}| +  |\mathring{E}| - |E| - 1\\
& = |\partial_O V| -  |\partial_O E| - 1\\
& = \left( \sum_{i=1}^b \left( |\partial_O V(\gamma_i)| -  |\partial_O E(\gamma_i)| \right) \right) - 1\\
& = \left( \sum_{i=1}^b m_i \right) - 1
\end{align*}
Therein, we simply used the fact that each open path that is not a cycle 
contains 1 more open vertex than the number of open edges.
Altogether, we proved that
$$
k(G) = k(\bar G) + \sum_{i=1}^b m_i - 1
$$
where $k(\bar G) = 2g  +b-1$ or $g+b-1$ and $\sum_{i=1}^b m_i$ is the number of non-cyclic disjoint 
open paths on the boundary of $G$.
\end{proof}

For instance, consider a planar lattice punctured with $b$ holes (that is a sphere
punctured by $b+1$ holes).
If the $b$ holes are closed, then the corresponding surface code encodes $k=b$ qubits.
But if the boundary of each of these $b$ holes is the union of an open path and a closed path, 
this generalized surface code encodes $k=2b-1$ qubits that is
approximately twice more than the standard surface codes based on closed holes.
However, one cannot immediately conclude that these codes are always better since 
this transformation may also reduce the minimum distance.

\medskip
A second example is shown in Figure~\ref{fig:logical_mixed_punctured_2torus}. It is an 
orientable surface of genus 2 with 4 mixed boundaries holes. 
One of these 4 holes is open, another one is closed and each of the other 2 holes have 2 open 
paths and 2 closed paths on their boundary.
Corollary~\ref{coro:k_mixed_surface_codes} tells us that the associated surface code
encodes $10$ qubits ($g=2, b=4$ and $m=4$). A symplectic basis of the logical operators
is represented.

\begin{figure}[h]
\begin{center}
\includegraphics[scale=1]{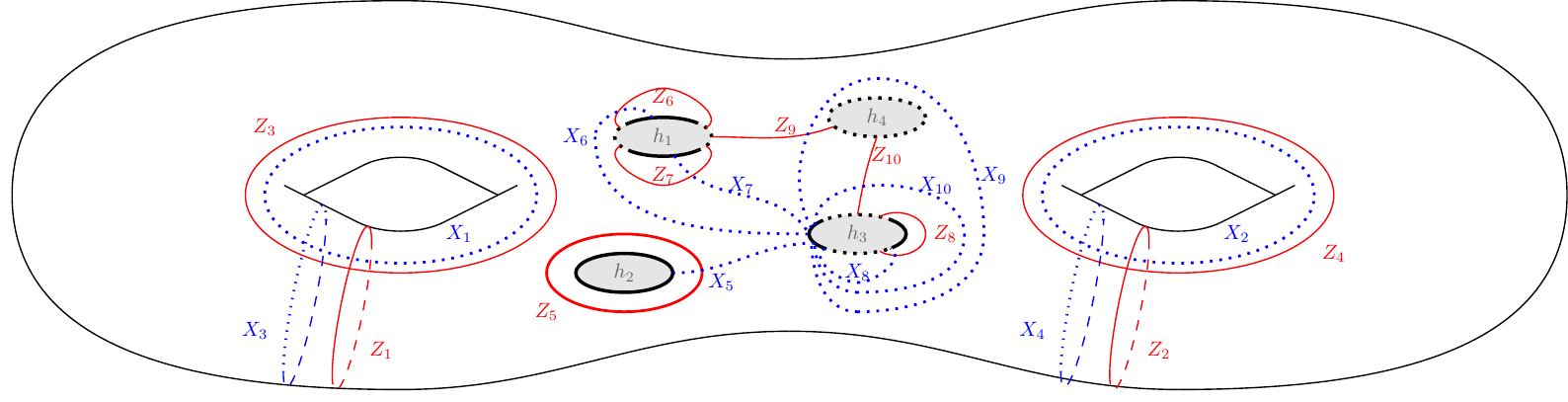}
\caption{A symplectic basis of the set of logical operators for an orientable surface of genus 2
punctured with holes with mixed boundaries.}
\label{fig:logical_mixed_punctured_2torus}
\end{center}
\end{figure} 

To conclude, let us propose a general strategy to construct a symplectic 
basis of logical operators for generalized surface codes. As before we can
attach 2 pairs of logical operators with any handle (or just one pair for 
non-orientable surface). The unusual set of logical operators comes from holes
and their boundaries.
The following strategy produces a symplectic basis to complete the handle 
logical operators in a symplectic basis.

Assume that the boundary of the surface contains $b$ holes and 
$m$ open paths. Denote by $h_1, \dots, h_{b_o}$ the $b_o$ holes
of the surface which contains at least one open boundary.
Let $\delta_1, \dots, \delta_{\ell_c}$ be the $\ell_c$ closed paths (cyclic or not)
along the boundary of the surface.
In what follows, for any path $\gamma$ of $G$ or its dual, 
$P(\gamma)$ denotes the Pauli operator which acts as the Pauli matrix 
$P = X$ or $Z$ on the qubits supported by $\gamma$.

\bigskip
{\bf Construction of $Z_1, \dots, Z_{b+m-2}$:}

1. For $i=1, \dots, \ell_c - 1$, define $Z_i = Z(\delta_{i})$.

2. For $i=1, \dots, b_o$, pick an open vertex $v_i$ on 
the boundary of the hole $h_i$.

3. For $i=1, \dots, b_o-1$, construct a path $\gamma_i$ that connects
$v_i$ to the last vertex $v_{\ell_o}$ and define $Z_{\ell_c-1+i} = Z(\gamma_i)$.

\medskip
{\bf Construction of $X_1, \dots, X_{b+m-2}$:}

1. For $i=1, \dots, \ell_c -1$ construct a path $\delta^*_i$ of the dual 
graph which connects $\delta_i$ to the last closed path $\delta_{\ell_c}$. 
and define $X_i = X(\delta^*_i)$.

2. For $i=1, \dots, b_o - 1$, construct a path $\gamma_i^*$ in the dual graph
that connects the path $\delta_{\ell_c}$ to itself after a loop around the hole 
$h_i$ and define $Z_i = Z(\gamma_{i}^*)$.

\medskip
This provides $\ell_c-1 + b_o-1$ independent logical operators of each type. 
One can easily see that $\ell_o-1 + \ell_c-1 = b+m-2$ is exactly the contribution 
of the holes and the open boundaries to $k$.
These operators satisfy the expected commutation relations of 
a symplectic basis.

\section{Homology of surfaces with mixed boundaries} \label{section:homology}

In this section, we develop a notion of homology appropriate to
surfaces with open and closed boundaries. Although homology of surfaces
with closed boundaries is well understood \cite{Ha02, Gi10}, we are not aware of 
any in-depth study of this peculiar notion of homology.
Recall that understanding homology of the underlying tiling is crucial for 
surface codes since it is responsible for  
the commutation relations of stabilizers and it governs the parameters of these quantum 
codes.

\subsection{Cycle space of open graphs}

As explained in Section~\ref{section:main_result}, cycles appear naturally
in the context of surface codes as the support of logical operators and
stabilizers. The number of encoded qubits of a generalized surface code will
be obtained by enumerating logical operators, that is cycles. This enumeration
relies on a closed formula for the number of cycles in a surface $G = (V, E, F)$,
obtained in the present section.

The notion of cycles and cocycles we derive in this section applies not only to a generalized surfaces $G = (V, E, F)$ with open and closed boundaries, but more generally to abstract graphs $H = (V,E)$ with open and closed boundaries.
Since open edges have no physical relevance (they do not support any qubit),
our goal is to compute the number of relative cycles of the graph $\mathring H = ( V, \mathring E)$.
First, we recall the definitions of a graph and a cycle and 
we extend these definitions to graphs with some open edges and vertices. 
We emphasize that open edges are mostly irrelevant throughout this section 
since we are concerned with the cycle structure of $\mathring H = ( V, \mathring E)$; their sole purpose is to define the notion of open vertices.

\medskip
A {\em graph} is defined to be a pair $H = (V, E)$, where $V$ is a finite set and 
$E$ is a set of pairs $\{u, v\}$ of elements of $V$. The elements of $V$ are called 
vertices and those of $E$ are the edges of the graph $H$. We assume that all 
graphs are finite and simple, {\em i.e.} have no loops nor multiple edges.
We refer to \cite{Be73} or \cite{Bo12} for standard results in graph theory.

\medskip
We consider a more general class of graphs that we call {\em open graphs},
where some edges are declared to be open. The set of open edges of 
$H$ is a subset of $E$ that we denote $\partial_{O} E$. The set of open vertices $\partial_O V$
is a subset of $V$ formed of the vertices that are adjacent to an open edge.  
We denote the interiors $\mathring V = V\backslash \partial_O V$ and $\mathring E = E\backslash \partial_O E$.
With these definitions, it is clear that the edges of $\mathring E$ have at most one endpoint in $\partial_O V$.
A graph, as defined in the previous paragraph, is simply an open graph 
with trivial open-edge set $\partial_O E = \emptyset$.

\medskip
In this paper, $\kappa(H)$ denotes the number of connected components of
the graph $H$. We will also have to enumerate the connected 
components of a graph $H$ satisfying some properties. For this purpose, we 
introduce the notation $\kappa_{X}(H)$ (respectively 
$\kappa_{\bar X}(H)$) for the number of connected components of the 
graph $H$ containing at least one element of the set $X$ (respectively no 
elements of the set $X$). The set $X$ will typically be a subset of the vertex set
or the edge set of $H$.

\medskip
A {\em cycle} in a graph $H=(V, E)$ is defined to be a subset $\gamma \subset E$ of 
edges of the graph $H$ that meets every vertex an even number of times. For an open graph, 
we only require that $\gamma$ meets the vertices of 
$V \backslash \partial_O V$ an even number of times. These cycles were called relative
cycles in the statement of Theorem~\ref{theo:code_parameters} and will simply be
called cycles or cycles of an open graph in what follows.
For instance, a path connecting two different open vertices is a cycle. 
If $\partial_O V$ is empty, these definitions coincide.

\medskip
The structure of the set of cycles of a graph, which we denote ${\cal C}(H)$ is well
understood when $H$ has no open vertices. It is a $\F_2$-linear space, where the 
sum of two cycles is defined as their symmetric difference.
This set is called the {\em cycle space} of $H$ and its dimension is
\begin{equation} \label{eq:cycle_code}
\dim {\cal C}(H) = |E| - |V| + \kappa(H).
\end{equation}
See for instance \cite{Be73} for an elegant proof of this result.

Here, we prove that the set of cycles of an open graph has a quite similar 
structure.
\begin{prop} \label{prop:cycle_code_boundaries}
The set of cycles of the interior $\mathring H$ of an open graph $H=(V, E)$ with open-vertex set
$\partial_O V$ and open-edge set
$\partial_O E$ is a $\F_2$-linear code of dimension
$$
\dim {\cal C}(\mathring H) = |\mathring E| - |\mathring{V}| + \kappa_{\overline{\partial_O V}}(\mathring H).
$$
\end{prop}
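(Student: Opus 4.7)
The plan is to reduce this to the classical cycle-space formula \eqref{eq:cycle_code} by constructing an auxiliary ordinary graph whose cycle space is in bijection with $\mathcal{C}(\mathring H)$. The key observation is that, because every edge in $\mathring E$ has at most one endpoint in $\partial_O V$, the open vertices behave in cycles like ``free sinks'' where parity constraints are switched off. So I would introduce a single new ghost vertex $*$ and build a graph $H' = (V', E')$ where $V' = \mathring V \cup \{*\}$ and $E'$ is obtained from $\mathring E$ by replacing every edge $\{u,v\}\in \mathring E$ with $v\in \partial_O V$ by the edge $\{u,*\}$, while keeping all edges of $\mathring E$ with both endpoints in $\mathring V$ unchanged. (I would allow multi-edges at $*$ for the sake of the argument, which does not affect the classical formula.)

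Next I would establish a bijection between $\mathcal{C}(\mathring H)$ and $\mathcal{C}(H')$. A subset $\gamma\subset \mathring E$ meets every vertex of $\mathring V$ evenly if and only if the corresponding subset $\gamma'\subset E'$ does; and at $*$, the parity is automatic since the total degree of any edge-subset is even. This identifies the two cycle spaces as $\mathbb{F}_2$-linear subspaces of the same dimension, so $\dim \mathcal{C}(\mathring H) = \dim \mathcal{C}(H')$.

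Now I would apply the classical formula \eqref{eq:cycle_code} to $H'$: $\dim \mathcal{C}(H') = |E'| - |V'| + \kappa(H')$. Clearly $|E'| = |\mathring E|$ and $|V'| = |\mathring V| + 1$, so the only nontrivial step is to compute $\kappa(H')$. The construction of $H'$ glues, through $*$, precisely those components of $\mathring H$ that contain at least one open vertex; components of $\mathring H$ containing no vertex of $\partial_O V$ remain components of $H'$ unchanged, and $*$ itself is either isolated (when $\partial_O V = \emptyset$) or merged with all components touching $\partial_O V$. In either case one gets $\kappa(H') = \kappa_{\overline{\partial_O V}}(\mathring H) + 1$, and substitution gives the announced formula.

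The only real subtlety is the last bookkeeping step, namely checking that $\kappa(H') = \kappa_{\overline{\partial_O V}}(\mathring H) + 1$ holds uniformly whether or not open vertices are present, and verifying that an isolated open vertex in $\mathring H$ is correctly absorbed into the ghost component (it becomes the isolated vertex $*$ after deleting its unique non-existing image edge), so that it does \emph{not} contribute a spurious component. Once that is pinned down the argument is purely combinatorial and the proposition follows directly from \eqref{eq:cycle_code}.
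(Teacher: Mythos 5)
Your argument is correct, but it is not the proof the paper gives, so a comparison is in order. The paper builds a \emph{double cover} $H_2$: two disjoint copies of $\mathring H$ joined by a rung edge at each open vertex. It then applies the classical formula \eqref{eq:cycle_code} to $H_2$, defines the restriction map $\pi:{\cal C}(H_2)\to{\cal C}(\mathring H)$ onto one copy, proves surjectivity, identifies $\Ker\pi$ with the cycle space of the other copy (now treated as closed), and finishes by rank--nullity. Your construction instead contracts all open vertices to a single apex $*$, which yields a direct edge-bijection and hence a linear \emph{isomorphism} ${\cal C}(\mathring H)\cong{\cal C}(H')$ --- the handshake-lemma observation that the parity at $*$ is implied by the parities at the vertices of $\mathring V$ is exactly the right point, and it is what lets you skip the surjectivity and kernel computations entirely. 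Your component count $\kappa(H')=\kappa_{\overline{\partial_O V}}(\mathring H)+1$ is also correct in all the edge cases you flag (no open vertices, isolated open vertices): in each case $*$ either absorbs every component meeting $\partial_O V$ into one component or sits alone, contributing exactly the $+1$. What your route buys is brevity and a transparent explanation of why open vertices are ``free''; what it costs is leaving the class of simple graphs, since distinct edges $\{u,v\}$, $\{u,v'\}$ with $v,v'\in\partial_O V$ become parallel edges at $*$ (and an edge with both endpoints open would become a loop). You should therefore say explicitly that \eqref{eq:cycle_code} holds verbatim for multigraphs with loops --- it does, with the same proof via spanning forests, and this also makes your argument robust even without the paper's standing assumption that an edge of $\mathring E$ has at most one open endpoint, since a loop at $*$ contributes one dimension to ${\cal C}(H')$ exactly as the corresponding unconstrained edge does to ${\cal C}(\mathring H)$. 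With that one sentence added, your proof is complete and arguably more elementary than the paper's.
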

Recall that $\kappa_{\overline{\partial_O V}}(\mathring H)$ denotes the number of connected 
components of $\mathring H$ containing no open vertex $v \in \partial_O V$.
As expected we recover Eq.\eqref{eq:cycle_code} for graphs with an empty
open-vertex set.

\begin{proof}
To prove that the set of cycles of $\mathring H$ is a $\F_2$-linear code, it suffices to check 
that the sum (that is the symmetric difference) of two cycles is also a cycle.

The only non-trivial point is the dimension formula. 
In order to prove it, we construct a 2-cover $H_2$ of the graph $\mathring  H$ which has 
no open vertex. Eq.\eqref{eq:cycle_code} provides the dimension of the cycle 
code of $H_2$. Then, we derive the dimension of the cycle space of $\mathring H$ from
the one of $H_2$.
This idea of introducing a double cover to simplify our problem is quite common. 
For instance, it is standard to replace a non-orientable surface by its orientable double cover.

This 2-cover is constructed by taking two copies of the graph $\mathring H$ 
and by connecting the corresponding pair of open vertices.
Formally, the vertex set of $H_2 = (V_2, E_2)$ is $V_2 = V \times \F_2$.
Two vertices $(u, x)$ and $(v,y)$ in $V_2$ are linked by an 
edge if and only if $x=y$ and $\{u, v\}$ is an edge of $\mathring H$.
Moreover, an extra edge is added between all the pairs of vertices $(v, 0)$ 
and $(v, 1)$ associated with an open vertex $v \in \partial_O V$.

The graph $\mathring H$ is embedded in $H_2$ as the subgraph induced by the vertices
of the form $(u, 0)$.
This restriction that maps $H_2$ to $\mathring H$ induces a projection of the cycle space of 
$H_2$ onto the cycle space of $\mathring H$
\begin{align*}
\pi: {\cal C}(H_2) & \longrightarrow {\cal C}(\mathring H)
\end{align*}
which sends a cycle $\gamma \subset E_2$ onto its restriction $\gamma_0$ to the 
set of edges of the form $\{(u, 0), (v, 0)\}$. 
This linear application $\pi$ is surjective since any cycle 
$\gamma$ of $\mathring H$ is the image of the cycle of $H_2$ obtained from two copies 
$\gamma \times \{0\}$ and $\gamma \times \{1\}$ of the cycle $\gamma$,
and connecting them through the open vertices incident to $\gamma$.

From Eq.\eqref{eq:cycle_code}, the cycle space of $H_2$ has dimension
$$
\dim {\cal C}(H_2) = |E_2| - |V_2| + \kappa(H_2) 
= 2|E| + |\partial_O V| - 2|V| + 2\kappa_{\overline{\partial_O V}}(\mathring H) + \kappa_{\partial_O V}(\mathring H),
$$
where $\kappa_{\partial_O V}(\mathring H)$ is the number of connected components of $\mathring H$ 
containing at least one open vertex and $\kappa_{\overline{\partial_O V}}( \mathring H)$ denotes 
the number of connected components of $\mathring H$ containing no open vertices. 

Consider the kernel of the projection $\pi$.
A cycle of $H_2$ has a trivial image under $\pi$ if and only if it is a cycle 
of the subgraph $\bar H$ of $H_2$ induced by the vertices $(u, 1)$ 
where no vertices are declared to be open. We can therefore apply 
Eq.\eqref{eq:cycle_code}, which proves that
$$
\dim \Ker \pi = \dim {\cal C} (\bar H) = |E| - |V| + \kappa(\mathring H).
$$

By the rank nullity theorem, the dimension of the cycle space of $\mathring H$ is thus
$$
\dim {\cal C}(H_2) - \dim \ker \pi 
= |E| - |V| +  |\partial_O V| + 2 \kappa_{\overline{\partial_O V}}(\mathring H) + \kappa_{\partial_O V}(\mathring H) - \kappa(\mathring H).
$$
To conclude, note that $\kappa(\mathring H) = \kappa_{\overline{\partial_O V}}(\mathring H) + \kappa_{\partial_O V}(\mathring H)$.
\end{proof}

\subsection{Definition of Homology}

Intuitively, the first homology group of a tiling $G=(V, E, F)$ represents 
the cycles of the graph up to deformations. To formally define this group, 
we associate the following {\em chain complex} with the surface $G = (V, E, F)$.
$$
C_2 \overset{\partial_2}{\longrightarrow} C_1 \overset{\partial_1}{\longrightarrow} C_0.
$$
In this notation, $C_2, C_1$ and $C_0$ are the 3 $\F_2$-linear spaces
$$
C_0 = \bigoplus_{v \in \mathring{V}} \F_2 v, \quad C_1 = \bigoplus_{e \in \mathring{E}} \F_2 e, \quad C_2 = \bigoplus_{f \in F} \F_2 f.
$$
This triple is equipped with two $\F_2$-linear maps
$
\partial_2: C_2 \rightarrow C_1
$
and
$
\partial_1: C_1 \rightarrow C_0
$
defined by 
$$
\partial_2(f) = \sum_{\substack{e \in \mathring{E} \\ e \in f}} e
\quad \text{ and } \quad 
\partial_1(e) = \sum_{\substack{v \in \mathring{V} \\ v \in e}} v.
$$
These maps are called \emph{boundary maps}.
This definition is motivated by the fact that for the surface codes introduced 
in Definition~\ref{defi:surface_codes}, a $Z$-error $E_Z \in \{I, Z\}^{\otimes n}$ 
can be detected through the measurements of the vertex operators $X_v$ corresponding 
to non-open vertices $v \in \mathring V$.

\medskip
Before going to the definition of homology groups, recall that any subset $S \subset X$
corresponds to a vector $\delta_S$ of the space $\oplus_{x \in X} \F_2 x$ by the map 
$S \mapsto \delta_S = \sum_{x \in S} x$. This allows us to consider subsets of vertices, edges 
or faces respectively as vectors of $C_0$, $C_1$ or $C_2$ and to put a geometric 
intuition on these sets. For instance, a subset of $E$ corresponds to a vector of $C_1$.
A subset of edges corresponds to a vector of the kernel of the application $\partial_1$
if and only if it is called a cycle of the graph $(V, \mathring{E})$ where the vertices of the 
subset $\partial_O V$ are declared to be open.

\medskip
The following lemma motivates the choice of the chain spaces. It is also the key 
ingredient making the definition of surface codes possible on any surfaces
with or without boundaries. More precisely, it will be used to justify the 
commutation relations between the stabilizer generators of a surface code.
\begin{lemma} \label{lemma:boundary_composition}
For any tiling $G$, with or without boundaries, the composition of the 
two boundary maps is trivial: $\partial_1 \circ \partial_2 = 0$.
\end{lemma}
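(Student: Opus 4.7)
The plan is to expand $\partial_1 \circ \partial_2(f)$ for an arbitrary face $f \in F$ and show that the coefficient of every vertex $v \in \mathring V$ in this expansion is even. Once this is established for every face (which freely generates $C_2$), linearity of the boundary maps gives $\partial_1 \circ \partial_2 = 0$.

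Concretely, I would start by writing
\[
\partial_1 \partial_2(f) \;=\; \sum_{\substack{e \in \mathring E \\ e \in f}} \partial_1(e) \;=\; \sum_{\substack{e \in \mathring E \\ e \in f}} \ \sum_{\substack{v \in \mathring V \\ v \in e}} v \;=\; \sum_{v \in \mathring V} \lambda_{v,f}\, v,
\]
where $\lambda_{v,f} \in \F_2$ counts, modulo $2$, the number of edges $e \in \mathring E$ with $v \in e$ and $e \in f$. The goal reduces to proving $\lambda_{v,f} = 0$ for each $v \in \mathring V$.

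Here I would invoke two structural facts. First, by the combinatorial hypotheses on the surface (each face is a disk, an edge belongs to a given face at most once, and distinct faces share at most one edge), the boundary of $f$ is a simple cycle in the graph $(V,E)$; consequently every vertex of $V$ lying on $\partial f$ is incident to \emph{exactly two} edges of $f$, and every other vertex to zero. Second, and this is the point where the open/closed distinction enters, the definition of $\partial_O V$ says that $v \in \partial_O V$ as soon as $v$ is incident to any open edge; equivalently, if $v \in \mathring V$ then every edge of $E$ incident to $v$ lies in $\mathring E$. Combining these two observations, for $v \in \mathring V$ the number of edges of $f$ meeting $v$ that lie in $\mathring E$ equals the total number of edges of $f$ meeting $v$, which is either $0$ or $2$. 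Thus $\lambda_{v,f} \equiv 0 \pmod 2$, as required.

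The only subtle point — and really the only thing that could go wrong — is the compatibility between the two restrictions ``$e \in \mathring E$'' and ``$v \in \mathring V$'': one might a priori worry that an open edge of $f$ at $v$ could spoil the count, leaving a single non-open edge incident to $v$. The resolution is precisely the definitional choice highlighted above: the open-vertex set is defined as the set of endpoints of open edges, so a non-open vertex is automatically shielded from any open edge. Aside from this observation, the proof is a one-line mod-$2$ computation, and I would keep the write-up correspondingly short.
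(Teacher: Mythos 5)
Your proof is correct and rests on the same two facts as the paper's own argument: the boundary of a face is an elementary cycle meeting each vertex $0$ or $2$ times, and a non-open vertex is by definition incident to no open edge, so restricting the sums to $\mathring{E}$ and $\mathring{V}$ cannot leave a stray odd contribution. The paper organizes the computation as a telescoping sum over the cyclic edge sequence of the face (treating a single open edge explicitly and leaving the general case to the reader), whereas your per-vertex parity count handles any number of open edges uniformly --- a cosmetic but welcome difference, not a different method.
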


\begin{proof}
By linearity, it suffices to check that the image of a face $f \in F$ under
$\partial_1 \circ \partial_2$  is trivial.
A face of $G$ is an elementary cycle of the form 
$f = \{ \{v_1, v_2\}, \{v_2, v_3\} \dots, \{v_{\ell}, v_1\} \}$. 
If it contains only non-open edges (and by consequence only 
non-open vertices), then clearly $\partial_1 \circ \partial_2$
vanishes.
Otherwise, we assume that only the last edge $\{v_{\ell}, v_1\}$
is open. Its two endpoints are thus open vertices.
Then, we get
\begin{align*}
\partial_1 \circ \partial_2 (f) 
& = \partial_1( \sum_{i=1}^{\ell-1} \{v_i, v_{i+1}\} )\\
& = \partial_1(\{v_1, v_2\}) +  \partial_1( \sum_{i=2}^{\ell-2} \{v_i, v_{i+1}\} ) + \partial_1(\{v_{\ell -1}, v_{\ell}\})\\
& = v_2 + \sum_{i=2}^{\ell-2} (v_{i} + v_{i+1}) + v_{\ell-1} = 0.
\end{align*}
Adapting this argument to prove the general case where more edges 
of $f$ are open is straightforward.
\end{proof}

It follows directly from the previous lemma that $\im \partial_2 \subset \Ker \partial_1$.
This allows us to define the first homology group of a surface.
\begin{defi} \label{defi:H_1}
The \emph{first homology group} of a tiling $G$ with open and closed boundaries, denoted 
$H_1^\partial(G)$, is defined to be the quotient space
$
H_1^\partial(G) = \Ker \partial_1 / \im \partial_2.
$
\end{defi}
In the present work, we consider $\F_2$-homology. As we see in the previous
definition, the group $H_1^\partial(G)$ is then a $\F_2$-linear space.
The symbol $\partial$ in $H_1^\partial(G)$ is used to distinguish our generalization 
of the first homology group to the standard homology group $H_1(G)$ usually considered
for surfaces without boundaries.

\subsection{Rank of the first homology group}

Let us determine the dimension of $H_1^\partial(G)$. Recall that 
$\kappa_{\overline{\partial_C E}}(G)$ denotes the number of connected 
components of $G$ containing no closed boundary $e \in \partial_C E$ and 
$\kappa_{\overline{\partial_O V}}(G)$ is the number of connected components 
of $G$ containing no open vertex $v \in \partial_O V$.

\begin{prop} \label{prop:dim_H_1}
The dimension of $H_1^\partial(G)$ is given by
$$
\dim H_1^\partial(G) = - |\mathring{V}| + |\mathring{E}| - |F| + \kappa_{\overline{\partial_O V}}(G)  + \kappa_{\overline{\partial_C E}}(G),
$$
\end{prop}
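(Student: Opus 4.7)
The plan is to apply the rank-nullity theorem to both boundary maps and reduce the problem to computing the two kernel dimensions. Since $H_1^\partial(G) = \Ker\partial_1/\im\partial_2$ and $\dim\im\partial_2 = |F| - \dim\Ker\partial_2$, we obtain $\dim H_1^\partial(G) = \dim\Ker\partial_1 + \dim\Ker\partial_2 - |F|$, so it suffices to show that $\dim\Ker\partial_1 = |\mathring E| - |\mathring V| + \kappa_{\overline{\partial_O V}}(G)$ and $\dim\Ker\partial_2 = \kappa_{\overline{\partial_C E}}(G)$.

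The first identity is essentially free from Proposition~\ref{prop:cycle_code_boundaries}. A chain $\gamma \in C_1$ lies in $\Ker\partial_1$ exactly when its support, viewed as a subset of $\mathring E$, meets every non-open vertex an even number of times, so $\Ker\partial_1$ is the cycle space of the open graph $\mathring H = (V, \mathring E)$ with open-vertex set $\partial_O V$. Proposition~\ref{prop:cycle_code_boundaries} then yields the desired dimension up to replacing $\kappa_{\overline{\partial_O V}}(\mathring H)$ by $\kappa_{\overline{\partial_O V}}(G)$, which is a short check: a connected component with no open vertex carries no open edges either (open edges are incident to open vertices by definition), so adding the open edges back neither splits nor merges such components.

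The remaining identity $\dim\Ker\partial_2 = \kappa_{\overline{\partial_C E}}(G)$ is the main obstacle. I would introduce an auxiliary face-adjacency graph $K$ with vertex set $F \sqcup \{*\}$, placing an edge between $f_1, f_2 \in F$ for every non-open interior edge they share and an edge between $f \in F$ and $*$ for every closed boundary edge contained in $f$. Reading the condition $\partial_2(\sum_{f \in S} f) = 0$ edge by edge shows that $S \subset F$ belongs to $\Ker\partial_2$ if and only if, for every edge of $K$, both endpoints are in $S$ or neither is, with $*$ treated as never belonging to $S$. Hence $\Ker\partial_2$ is the $\F_2$-span of the connected components of $K$ avoiding $*$, and its dimension equals the number of such components.

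The last step matches this count with $\kappa_{\overline{\partial_C E}}(G)$. The key observation is that open edges are boundary edges by definition, while interior edges (shared by exactly two faces) are automatically non-open; therefore any two faces sharing an edge in $G$ are joined by an edge in $K$. Combined with the fact that in a cellulation of a $2$-manifold the face-adjacency components coincide with the graph components of $G$, this produces a bijection between the connected components of $K$ not containing $*$ and the connected components of $G$ containing no closed boundary edge, which is exactly $\kappa_{\overline{\partial_C E}}(G)$. The delicate point in this argument is confirming that the cellulation hypotheses (each face is a disk, each interior edge is shared by exactly two faces, open edges lie on the boundary) really do force face-adjacency in $G$ to coincide with adjacency in $K$ and with connectivity in the underlying graph.
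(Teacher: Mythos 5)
Your proposal is correct and follows essentially the same route as the paper: reduce $\dim H_1^\partial(G)$ to the dimensions of $\Ker\partial_1$ and $\im\partial_2$, identify $\Ker\partial_1$ with the cycle space of the open graph $(V,\mathring E)$ via Proposition~\ref{prop:cycle_code_boundaries}, and determine the contribution of $\partial_2$ by a per-connected-component analysis of the relations among the face boundaries. Your auxiliary face-adjacency graph $K$ (computing $\dim\Ker\partial_2=\kappa_{\overline{\partial_C E}}(G)$) is just a more explicit rendering of the paper's argument that the vectors $\partial_2(f)$ in a component satisfy exactly one relation when the component has no closed boundary edge and none otherwise, giving $\dim\im\partial_2=|F|-\kappa_{\overline{\partial_C E}}(G)$.
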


For instance, for a connected orientable surface $G_{g, b}$ of genus $g$ 
with $b$ closed boundaries and no open boundaries, we recover the 
classical formula
$$
\dim H_1^\partial(G_{g, 0}) = 2g \quad \text{ and } \quad \dim H_1^\partial(G_{g, b>0}) = 2g+b-1.
$$
Therein, we used the following property of Euler-Poincar\'e characteristic:
$|V| - |E| + |F| = 2 - 2g - b$.

\begin{proof}
In order to compute the dimension of this quotient space, it suffices to
determine the dimension of $\Ker \partial_1$ and $\im \partial_2$.

The space $\Ker \partial_1$ corresponds to the cycle space of the 
open graph $(V, \mathring{E})$ with open 
vertex set $\partial_O V$. Its dimension, provided by 
Proposition~\ref{prop:cycle_code_boundaries}, is
$$
\dim \Ker \partial_1 = |\mathring{E}| - |\mathring{V}| + \kappa_{\overline{\partial_O V}}(G).
$$

The space $\im \partial_2$ is generated by the vectors $\partial_2(f)$
for $f\in F$. 
Consider a connected component $C$ of the tiling $G$.
Clearly, if the induced subtiling $C = (V_C, E_C, F_C)$ contains a closed boundary 
$e \in \partial_C E$ then the only relation of the form 
$\sum_{f \in F_C} \lambda_f \partial_2(f) = 0$ is the trivial one.
That means that the vectors $\partial_2(f)$ associated with the faces 
of this component are independent.
Assume now that $C$ contains no closed boundary.
Then each edge $e \in E \backslash \partial_O E$ belongs to exactly 0 or 2 
faces of $C$. This implies the non-trivial relation 
$\sum_{f \in F_C} \partial_2(f) = 0$,  proving that the vectors 
$\partial_2(f)$ are not independent. However, by a similar argument 
any $|F_C|-1$ of these faces are independent. Considering all the 
connected components of $G$ together, this proves that the dimension 
of $\im \partial_2$ is given by
$$
\dim \im \partial_2 = |F| - \kappa_{\overline{\partial_C E}}(G).
$$
where $\kappa_{\overline{\partial_C E}}(G)$ denotes the number of connected 
components of $G$ having no closed boundary $e \in \partial_C E$.

Altogether, we obtain the dimension of $H_1^\partial(G)$ by
$
\dim H_1^\partial(G) = \dim \Ker \partial_1 - \im \partial_2.
$
\end{proof}

\subsection{Local structure of the dual} \label{section:local_dual}

In order to prepare the construction of the dual of a combinatorial surface 
with boundaries, we provide in this section a precise description of the local
structure of a surface around a vertex. We will construct a dual face 
from the set $F_v$ of faces indicent to a vertex $v$. Figure~\ref{fig:local_dual}
illustrates this construction. 
We do not consider the distinction between an open and a closed boundary yet. 

By construction, the faces of any surface $(V, E, F)$ are glued together in such a way 
that
\begin{enumerate}
\item Any two faces meet in at most one edge.
\item Any edge belongs to either one or two faces.
\item Given any vertex $v \in V$, denote by $F_v$ the set of faces incident 
to $v$ and consider the graph obtained by adding an edge between any 
two elements of $F_v$ if the corresponding faces of $G$ share an edge. 
Then, this graph $F_v$ is either a self-avoiding path or a cycle.
\end{enumerate}

We previously defined boundaries through the edges that are on the boundary of a
unique face. Alternatively, one may define boundary vertices from the third property above 
by saying that a vertex $v$ is a boundary if the corresponding set $F_v$ induces a 
self-avoiding path. Then we would define boundary edges and faces as those incident
to a boundary vertex. This leads to an equivalent definition since a vertex $v$ is a 
boundary if and only if the corresponding set $F_v$ induces a self-avoiding path. The 
following lemma summarizes these equivalent definitions of boundaries.
\begin{lemma} \label{lemma:boundaries}
Let $G = (V, E, F)$ be a tiling with boundaries.
\begin{itemize}
\item $v \in \partial V$ 
$\Leftrightarrow$ $F_v$ induces a self-avoiding path
$\Leftrightarrow$ $v$ is incident to 2 edges of $\partial E$.
\item $e \in \partial E$ 
$\Leftrightarrow$ $e$ is incident to a unique face 
$\Leftrightarrow$ the two endpoints of $e$ belong to $\partial V$.
\item $f \in \partial F$ 
$\Leftrightarrow$ $f$ is incident to a vertex of $\partial V$ 
$\Leftrightarrow$ $f$ is incident to an edge of $\partial E$.
\end{itemize}
\end{lemma}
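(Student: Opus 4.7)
The plan is to treat the three bullets separately, in the order vertices, edges, faces, reducing each equivalence to the paper's definitions of $\partial V$, $\partial E$, $\partial F$ together with the local structure recalled just above the lemma (Property~3: for every $v$, the graph $F_v$ is a self-avoiding path or a cycle). For the vertex bullet I would argue by bookkeeping on the edges of $G$ at $v$: each face $f\in F_v$ meets $v$ on its own boundary cycle and therefore contributes exactly two edges of $G$ at $v$; an edge at $v$ shared by two faces shows up as one edge in the graph $F_v$, while a boundary edge at $v$ shows up as none. A self-avoiding path on $p$ vertices has $p-1$ edges and exactly two vertices of degree $1$, so its two endpoint-faces carry one boundary edge of $G$ at $v$ each and its interior faces carry none, for a total of exactly two boundary edges at $v$; a cycle $F_v$ analogously gives zero. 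Combined with Property~3 and the definition of $\partial V$ (at least one boundary edge at $v$), this yields the full three-way equivalence of the first bullet.

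For the edge bullet, the equivalence between $e\in\partial E$ and $e$ being incident to a unique face is immediate from the paper's definition of $\partial E$, and the implication $e\in\partial E \Rightarrow$ both endpoints in $\partial V$ is also definitional. The delicate direction is the converse: assume $e=\{u,v\}$ has both endpoints in $\partial V$ and, for contradiction, is incident to two faces $f_1$ and $f_2$. By the vertex case both $F_u$ and $F_v$ are self-avoiding paths in which $f_1, f_2$ are adjacent; I would then use the fact that $G$ cellulates a 2-manifold with boundary, so that the cyclic ordering of faces around each boundary vertex realizes a half-disk local picture, to deduce that $e$ cannot be simultaneously interior to both $F_u$ and $F_v$. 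This is the step I expect to be the main obstacle, as it is where the argument really uses the manifold structure of $G$ rather than just Property~3 combinatorially.

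For the face bullet, $\partial F$ is defined as the set of faces incident to a boundary edge, so the equivalence between $f\in\partial F$ and incidence to an edge of $\partial E$ is immediate. The implication "incident to a boundary edge $\Rightarrow$ incident to a boundary vertex" follows at once from the edge case (endpoints of a boundary edge lie in $\partial V$). For the converse I would use a walk-around argument along the boundary cycle of $f$: if $f$ meets some $v\in\partial V$, then either $f$ sits at an endpoint of the path $F_v$, in which case $f$ already has a boundary edge at $v$, or $f$ sits strictly inside $F_v$; in the latter case, applying the vertex equivalence at successive vertices of $\partial f$ eventually produces a vertex $v'\in f$ at which $f$ is an endpoint of $F_{v'}$, exhibiting a boundary edge of $f$. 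Existence of such a $v'$ along the cycle of $f$ relies on the same 2-manifold structure invoked in the edge case, so the two nontrivial directions of the lemma stand or fall together.
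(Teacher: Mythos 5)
Your counting argument for the first bullet is correct, and it is in fact more than the paper supplies: the paper offers no proof of this lemma at all, remarking only that it ``is clear'' and ``stated only for convenience.'' The bookkeeping $2|F_v| = 2\,(\text{number of edges of } F_v) + (\text{number of boundary edges of } G \text{ at } v)$, combined with Property~3 of Section~\ref{section:local_dual}, does yield the three-way equivalence for vertices exactly as you describe.

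The real trouble lies in the two steps you yourself flag as the main obstacles, and it is worse than an obstacle: the converse implications in the second and third bullets are \emph{false} under the paper's stated hypotheses, so no appeal to the manifold structure can close them. Take two unit squares glued along an edge $e=\{u,v\}$; this is a cellulation of a disk satisfying every assumption the paper imposes (no multiple edges, no edge twice on a face, two faces sharing at most one edge). Both $u$ and $v$ are endpoints of boundary edges, hence lie in $\partial V$, yet $e$ is incident to two faces, so $e\notin\partial E$. Here $F_u$ and $F_v$ are both the two-vertex path $f_1\,\text{--}\,f_2$ and $e$ realizes the unique edge of each, so the contradiction you hope to extract does not exist. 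The paper itself exhibits exactly this configuration: the remark at the end of Section~\ref{section:dual} and Figure~\ref{fig:bad_dual} concern non-open edges whose two endpoints are both open (hence boundary) vertices. The third bullet fails similarly: in a triangulated disk, a triangle touching the boundary circle in a single vertex is incident to a vertex of $\partial V$ but to no edge of $\partial E$, so the walk around $\partial f$ need never reach a vertex $v'$ at which $f$ is an endpoint of $F_{v'}$. Only the definitional directions and the first bullet are true, and only those are used downstream (to justify Lemma~\ref{lemma:dual_faces}); the correct repair is to weaken the second and third bullets to one-sided implications, not to search for a subtler proof of the equivalences.
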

This lemma is clear. It is stated only for convenience.

\begin{figure}[h]
\centering
\begin{tikzpicture}[scale=2]

\tikzstyle{every node}=[circle, draw, fill=black!100, inner sep=1pt, minimum width=2pt]
\draw
	(0,0) node (v0) {}
	(1,.1) node (v1) {}
	(.7,.8) node (v2) {}	
	(.3,1.1) node (v3) {}
	(-.2,1.15) node (v4) {}
	(-.6,.85) node (v5) {}
	(-1,.3) node (v6) {}
	(-1.05,-.3) node (v7) {}
	(-.65,-.7) node (v8) {}
	(-.3,-1) node (v9) {}
	(.2,-.95) node (v10) {}
	(.7,-.6) node (v11) {};	

\draw
	(v0) -- (v1)
	(v0) -- (v2)
	(v0) -- (v5)
	(v0) -- (v7)
	(v0) -- (v8)
	(v0) -- (v11)	
	(v1) -- (v2) -- (v3) -- (v4) -- (v5) -- (v6) -- (v7) -- (v8) -- (v9) -- (v10) -- (v11) -- (v1);
	
\fill[color=blue!20, opacity=.5]
	(v1.center) -- (v2.center) -- (v3.center) -- (v4.center) -- (v5.center) -- (v6.center) -- (v7.center) -- (v8.center) -- (v9.center) -- (v10.center) -- (v11.center) -- (v1.center);
	
\draw
	(.6, .3) node (d1) {}
	(.02, .7) node (d2) {}
	(-.65, .2) node (d3) {}
	(-.55, -.33) node (d4) {}
	(-.05, -.7) node (d5) {}
	(.6, -.2) node (d6) {}
	;
	
\draw[dashed]
	(d1) -- (d2) -- (d3) -- (d4) -- (d5) -- (d6) -- (d1);
	

\draw[xshift = 3cm]
	(0,0) node (v0) {}
	(1,-.1) node (v1) {}
	(.7,.8) node (v2) {}	
	(.3,1.1) node (v3) {}
	(-.2,1.15) node (v4) {}
	(-.6,.85) node (v5) {}
	(-1,.3) node (v6) {}
	(-1.05,-.5) node (v7) {};
	
\draw[xshift = 3cm]
	(v0) -- (v1)
	(v0) -- (v2)
	(v0) -- (v5)
	(v0) -- (v7)
	(v1) -- (v2) -- (v3) -- (v4) -- (v5) -- (v6) -- (v7);

\draw[xshift = 3cm, line width = 1.5pt]
	(v0) -- (v1)
	(v0) -- (v7);
	
\fill[xshift = 3cm, color=blue!20, opacity=.5]
	(v1.center) -- (v2.center) -- (v3.center) -- (v4.center) -- (v5.center) -- (v6.center) -- (v7.center) -- (v0.center) -- (v1.center);
	
\draw[xshift = 3cm]
	(.6, .3) node (d1) {}
	(.02, .7) node (d2) {}
	(-.65, .2) node (d3) {};
	
\draw[xshift = 3cm, dashed]
	(d1) -- (d2) -- (d3);
	
\draw
	(v0)--(v1) node [midway] (d4) {}
	(v0)--(v7) node [midway] (d5) {};
	
\draw[dashed]
	(d3) -- (d5)
	(d1) -- (d4)
	(d4) -- (d5);
	
\end{tikzpicture}
\caption{Local structure of the dual graph around a vertex $v$. 
The faces containing a vertex $v$ are represented.
If $v$ is not a boundary (left), then $F_v$ induces a cycle in the dual graph
represented by dashed lines.
If $v$ is a boundary (right), then $F_v$ is completed by adding 2 vertices 
in the middle of the 2 boundary edges and by connecting them by a dual edge.
This results in a cycle $\bar F_v$ in the dual graph. 
The two dark edges at the right are the boundaries.}
\label{fig:local_dual}
\end{figure}
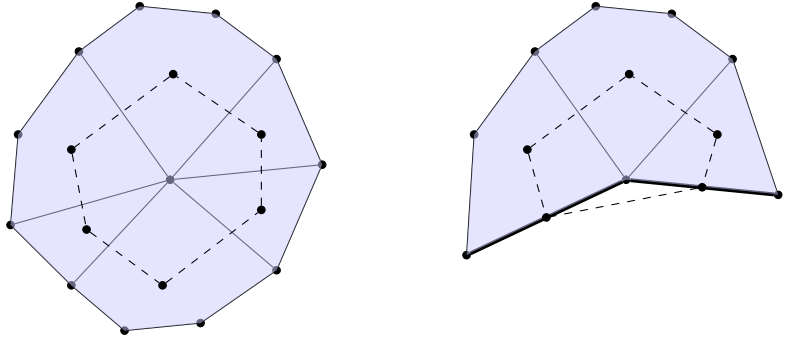

\medskip
We now show that the set $F_v$ introduced in item 3 above can always be endowed 
with a structure of cycles as depicted in Figure~\ref{fig:local_dual}.
It will later play the role of a face of the dual surface. 
Let $v \in V$ be a vertex of a surface $G$ and let $F_v$ be the set of faces of $G$
incident to $v$. Denote by $v_f$ the elements of $F_v$ where $f \in F$ is a face incident 
to $v \in V$.
Let $\bar F_v$ be the set $F_v$ completed with the 2 extra elements $v_e$ and $v_{e'}$ 
coming from the 2 boundary edges $e, e' \in E$ incident to $v$.
The set $\bar F_v$ is regarded as a vertex set and equipped with the edges $\{v_f, v_{f'}\}$ 
such that $f$ and $f'$ share an edge, the 2 edges $\{v_e, v_f\}$ where $f$ is the unique face 
containing $e$, and finally the edge $\{v_e, v_{e'}\}$ connecting the 2 boundaries.
When $v$ is not a boundary, $\bar F_v$ coincides with $F_v$.
\begin{lemma} \label{lemma:dual_faces}
Let $v \in V$ and let $F_v$ and $\bar F_v$ be the corresponding local dual graphs, then
the graph $F_v$ is a cycle if and only if $v \in V \backslash \partial V$ and 
$\bar F_v$ is always a cycle.
\end{lemma}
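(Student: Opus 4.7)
The plan is to reduce both claims to degree counting in the local graph around $v$, relying on property~3 (which already guarantees that $F_v$ is a self-avoiding path or a cycle) and the characterizations of boundary elements given in Lemma~\ref{lemma:boundaries}.

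First I would compute the degree of an arbitrary vertex $v_f \in F_v$. The face $f$ contains exactly two edges incident to $v$, and each such edge contributes a neighbor of $v_f$ in $F_v$ precisely when it is shared with a second face, i.e.\ when it is not a boundary edge. Hence $\deg_{F_v}(v_f) = 2 - b_f$, where $b_f \in \{0,1,2\}$ counts the boundary edges of $f$ incident to $v$. Combined with property~3, the graph $F_v$ is a cycle if and only if every $v_f$ has degree $2$, i.e.\ no edge at $v$ is a boundary edge. By Lemma~\ref{lemma:boundaries} this is exactly the condition $v \notin \partial V$, which settles the first half of the statement.

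For the second half I would show that $\bar F_v$ is always $2$-regular and connected, hence a cycle. When $v \notin \partial V$, we have $\bar F_v = F_v$ and we are done. When $v \in \partial V$, by the first half $F_v$ is a self-avoiding path, and by Lemma~\ref{lemma:boundaries} the vertex $v$ carries exactly two boundary edges $e, e'$, contained in unique faces $f_e$ and $f_{e'}$. The new vertices $v_e, v_{e'}$ each receive degree $2$ from two of the three added edges $\{v_e, v_{f_e}\}$, $\{v_{e'}, v_{f_{e'}}\}$, $\{v_e, v_{e'}\}$. The endpoints of the path $F_v$ are precisely the vertices with $b_f = 1$, namely $v_{f_e}$ and $v_{f_{e'}}$; each gains exactly one new incidence, restoring degree $2$, while interior vertices of the path are untouched. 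The same three new edges also glue the two endpoints of $F_v$ into a cycle through $v_e$ and $v_{e'}$, giving connectivity. Since a connected $2$-regular graph is a cycle, the claim follows.

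The only real subtlety is the degenerate configuration where $v$ has degree $2$ in $G$ with both incident edges being boundary edges lying on a single face $f$; then $F_v$ reduces to the isolated vertex $v_f$ (a trivial path), $b_f = 2$, $f_e = f_{e'} = f$, and $\bar F_v$ becomes the triangle on $\{v_f, v_e, v_{e'}\}$. This edge case still fits the same degree count and yields a cycle, but must be verified separately; it is essentially the only place where the bookkeeping could be overlooked, and it constitutes the main (mild) obstacle in an otherwise direct proof.
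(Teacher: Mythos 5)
Your argument is correct and follows the same route as the paper, which simply declares the lemma an immediate consequence of Lemma~\ref{lemma:boundaries} together with property~3 of the face structure; your degree count $\deg_{F_v}(v_f)=2-b_f$ and the explicit treatment of the degenerate two-boundary-edges-on-one-face case just make that ``immediate consequence'' precise. No gap.
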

This result is an immediate consequence of the previous lemma.
The two possible configurations for the local dual graph are represented in Figure~\ref{fig:local_dual}.
The cycle $\bar F_v$ is not necessarily included in the original surface but it can 
always be deformed to be embedded in the surface. This makes it an ideal candidate to define 
dual faces.

\subsection{Dual cellulation} \label{section:dual}

The dual cellulation $G^*$ of a combinatorial surface $G$ without boundaries 
is simply obtained by replacing each face of $G$ by a vertex and by connecting
two such vertices if the corresponding faces of the original graph 
share an edge. This leads to a correspondence between the edges
of the graph and those of its dual. 
This dual graph naturally defines a cellulation of the same surface
whose faces are given by the cycle $F_v$ (see Lemma~\ref{lemma:dual_faces}).
The faces of $G^*$ are therefore in one-to-one correspondence with 
the vertices of $G$.

\medskip
Before stating our definition of the dual of a surface with open and closed
boundaries, let us consider two natural notions of dual for surface with
boundaries without considering the type (open or closed) of the boundaries.
\begin{itemize}
\item {\bf Inner Dual:} Following the standard construction, one could 
define a dual graph by replacing each face of $G$ by a vertex and connecting two 
vertices if they correspond to faces sharing an edge. Then by 
Lemma~\ref{lemma:dual_faces}, we see that all the vertices 
$v \in V \backslash \partial V$ induce a cycle $F_v$ in the dual graph. This 
defines the face set of the dual. No face is associated with boundary vertices.
This corresponds to cutting the regions on the boundary of the surface where 
faces are not trivially defined.
\end{itemize}
This construction preserves the bijection between the faces of the graph and the
vertices of its dual. However an edge of $G$ that belongs to a single face, has no 
corresponding edge in the dual. The correspondence between the vertices of $G$ and 
the faces of its dual is also lost.
We may also consider the following extension of the inner dual.
\begin{itemize}
\item {\bf Outer Dual:} 
Another dual can be obtained by replacing each vertex of $G$ by a face
of the dual. 
More precisely, from Lemma~\ref{lemma:dual_faces}, we can replace 
any vertex $v \in V$, boundary or not, by the face defined by the cycle $\bar F_v$.
Then these faces are glued together as follows. For any pair of neighbours vertices $u$
and $v$, the faces $\bar F_u$ and $\bar F_v$ are stuck together along the
edge corresponding to the edge $\{u, v\}$.
\end{itemize}
This latter dual re-establishes the correspondence between the vertices of $G$ and the faces of $G^*$
but it does not preserve the bijections between $E$ and $E^*$ or between $F$ and $V^*$.

\medskip
The definition of a dual surface is thus less straightforward in the presence of 
boundaries. For our purpose, we also need to consider the type of the boundaries.
In order to preserve the surface code structure, we aim for a notion of duality which
leads to a correspondence (i) between $\mathring E$ and $\mathring{E}^*$ (like in 
the enlarged dual) since $\mathring E$ supports the qubits, (ii) between $\mathring V$
and $F^*$, and (iii) between $F$ and $\mathring{V}^*$. The bijections (ii) and (iii) are 
required to exchange the roles of $X$ and $Z$ through duality.

\medskip
We will proceed in two steps. Roughly, we define the dual surface as the inner dual 
along open boundaries and as the outer dual along closed boundaries.
Through steps 1 to 3 below, we define a dual graph for which the correspondence (i) 
is restored. Then, we extend this graph by adding some extra edges (the open edges 
in step 4) to properly define faces. This makes it a combinatorial surface and this
brings back the correspondence (ii) and (iii). We can check that the correspondences
are satisfied in Table~\ref{tab:correspondences_duality}.

\medskip
The dual surface $G^*=(V^*, E^*, F^*)$ is obtained from $G=(V, E, F)$ 
by the following process.
\begin{enumerate}
\item {\bf Non-open vertices}: Define a vertex $v_f \in V^*$ for each face $f \in F$

\item {\bf Open vertices}: Define an open vertex $v_e \in \partial_O V^*$ for 
each closed edge $e \in \partial_C E$ 

\item {\bf Non-open edges}: 
For each edge $e \in E \backslash \partial E$, add an edge $\{v_f, v_{f'}\} \in E^*$,
where $f$ and $f'$ are the two distinct faces containing $e$.
For each edge $e \in \partial_C E$, add an edge $\{v_e, v_f\} \in E^*$,
where $f \in F$ is the unique face of $G$ containing the boundary edge $e$.

\item {\bf Open edges}: Add an edge $\{v_e, v_{e'}\} \in \partial_O E^*$ for 
every pair of distinct edges $e, e' \in \partial_C E$ sharing a vertex $v \in \partial_C V$ in $G$.

\item {\bf Non-open faces}: 
Define a face from the cycle $F_v$ for each $v \in V \backslash \partial V$.

\item {\bf Open faces}: 
Define a face from the cycle $\bar F_v \in \partial_O F^*$ for each $v \in \partial_C V$.
\end{enumerate}

\begin{table}[hb]
\centering
\footnotesize
\caption{Correspondences between a surface and its dual.}
\label{tab:correspondences_duality}
\begin{tabular}{c c}
Tiling $(V, E, F)$ & Dual tiling $(V^*, E^*, F^*)$\\
\hline
$F$ & $\mathring{V}^*$\\
$\partial_C E$ & $\partial_O V^*$\\
\hline
$\mathring{E}$ & $\mathring{E}^*$\\
$\partial_C V$ & $\partial_O E^*$\\
\hline
$V \backslash \partial V$ & $\mathring{F}^*$\\
$\partial_C V$ & $\partial_O F^*$\\
\hline
\end{tabular}
\end{table}

\begin{figure}[h]
\centering
\begin{tikzpicture}[scale=1]

\draw[yshift = -.5cm] (2,.5) node {(a)};
\draw[xshift = 5cm, yshift = -.5cm] (2.5,.5) node {(b)};
\draw[xshift = 10cm, yshift = -.5cm] (2.5,.5) node {(c)};

\tikzstyle{every node}=[circle, draw, fill=black!100, inner sep=0pt, minimum width=2pt]
\draw
\foreach \x in {1,2,...,3}
\foreach \y in {1,2,...,3}
{ (\x,\y) node {} };

\draw 
(1,1)--(1,3)
(2,1)--(2,3)
(1,1)--(3,1)
(1,2)--(3,2)
(1,3)--(3,3);
\draw[style=dashed]
(3,1)--(3,3);

\fill[color=blue!20, opacity=.5] 
	(1,1)--(3,1)--(3,3)--(1,3)--(1,1);

\draw[xshift = 5cm, yshift = -.5cm]
\foreach \x in {1,2,...,3}
\foreach \y in {2,3}
{ (\x,\y) node {} };

\draw[xshift = 5cm,  yshift = -.5cm]
\foreach \x in {2,3}
\foreach \y in {1,2,...,4}
{ (\x,\y) node {} };

\draw[xshift = 5cm,  yshift = -.5cm, step=1cm] 
(1,2)--(3,2)
(1,3)--(3,3)
(2,1)--(2,4)
(3,1)--(3,4);

\draw[xshift = 10cm, yshift = -.5cm]
\foreach \x in {1,2,...,3}
\foreach \y in {2,3}
{ (\x,\y) node {} };

\draw[xshift = 10cm,  yshift = -.5cm]
\foreach \x in {2,3}
\foreach \y in {1,2,...,4}
{ (\x,\y) node {} };

\draw[xshift = 10cm,  yshift = -.5cm, step=1cm] 
(1,2)--(3,2)
(1,3)--(3,3)
(2,1)--(2,4)
(3,1)--(3,4);

\draw[xshift = 10cm,  yshift = -.5cm, step=1cm, style=dashed]
(1,2)--(2,1)--(3,1)
(3,4)--(2,4)--(1,3)--(1,2);

\fill[xshift = 10cm,  yshift = -.5cm, color=blue!20, opacity=.5] 
	(3,4)--(2,4)--(1,3)--(1,2)--(2,1)--(3,1)--(3,4);

\end{tikzpicture}
\caption{(a) A surface with open boundaries represented by dashed lines. 
(b) Its dual graph is obtained (steps 1 to 3) by replacing each non-open 
edge by a non-open edge. 
(c) Faces of the dual surface are obtained by adding open edges to equip 
the dual graph with a structure of surfaces.}
\end{figure}
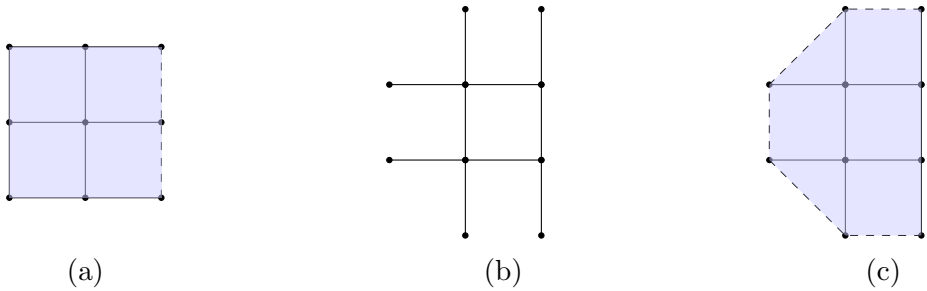

The different steps are chosen to emphasize the different correspondences
between the surface and its dual (See Table~\ref{tab:correspondences_duality}).
As previously, the non-open subsets of vertices, edges and faces of $G^*$ 
are denoted respectively by 
$\mathring{V}^* = V^* \backslash \partial_O V^*$, 
$\mathring{E}^* = E^* \backslash \partial_O E^*$
and 
$\mathring{F}^* = F^* \backslash \partial_O F^*$.
By construction, $G^*$ defines a cellulation of the same surface as $G$.
We can easily check that open vertices, edges and faces are defined in a 
coherent way.
This is a subgraph of the outer dual which contains the inner dual.

\medskip
{\bf Remark:} Property 2 of Section~\ref{section:local_dual} may not be satisfied by the dual
graph. If the 2 endpoints of a non-open edge are open then the dual of this edge will not belong
to any face of the dual as in Figure~\ref{fig:bad_dual}. 
In the error correction terminology, the surface code would have minimum distance 1. We can simply avoid 
this configuration by declaring any edge whose both endpoints are open as an open egde.

\medskip
{\bf Remark:} To avoid the presence of loop or multiple edges in the dual graph, one
can restrict the girth of $G$ to be at least 3.

\begin{figure}
\centering
\includegraphics[scale=.3]{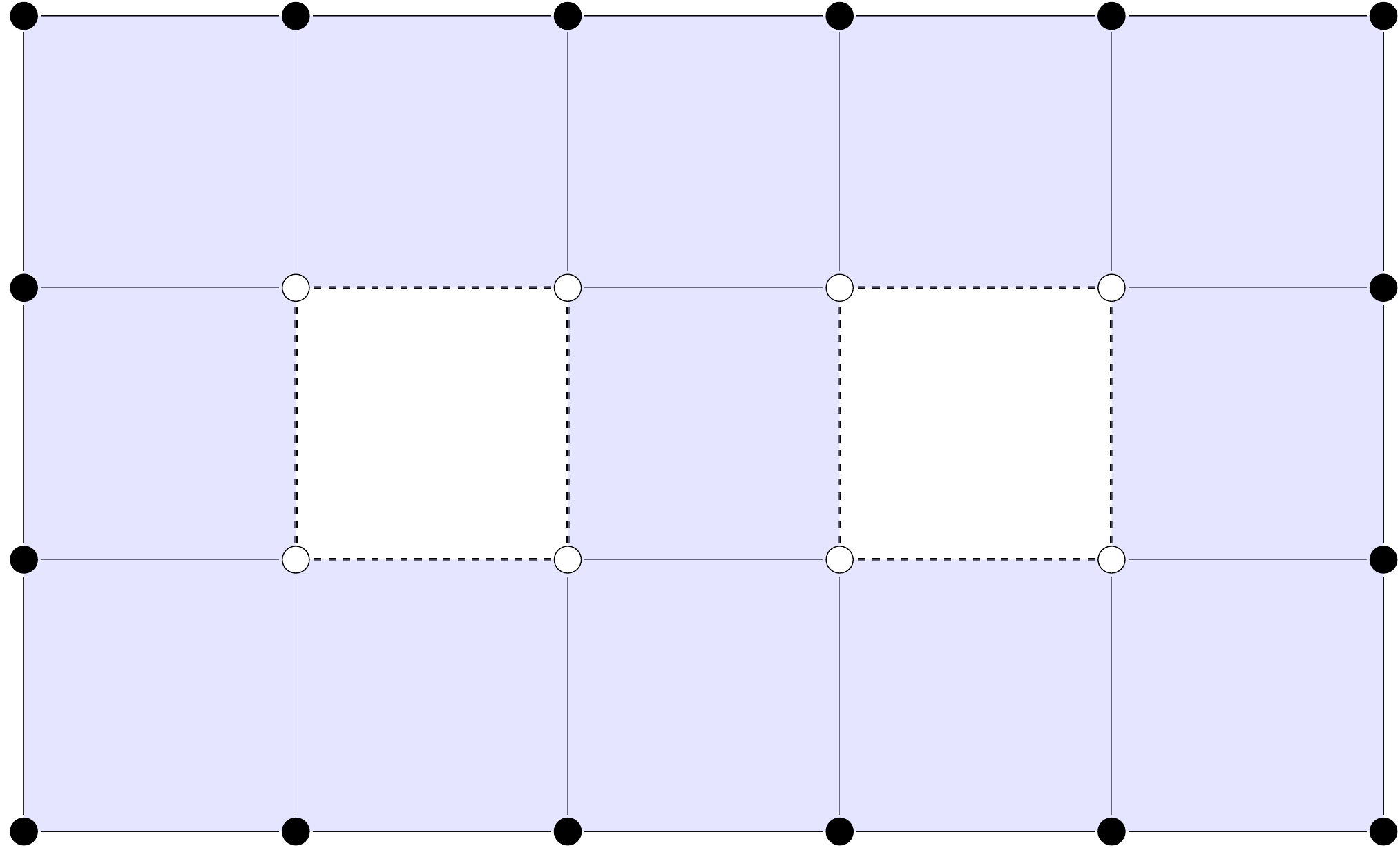}
\includegraphics[scale=.3]{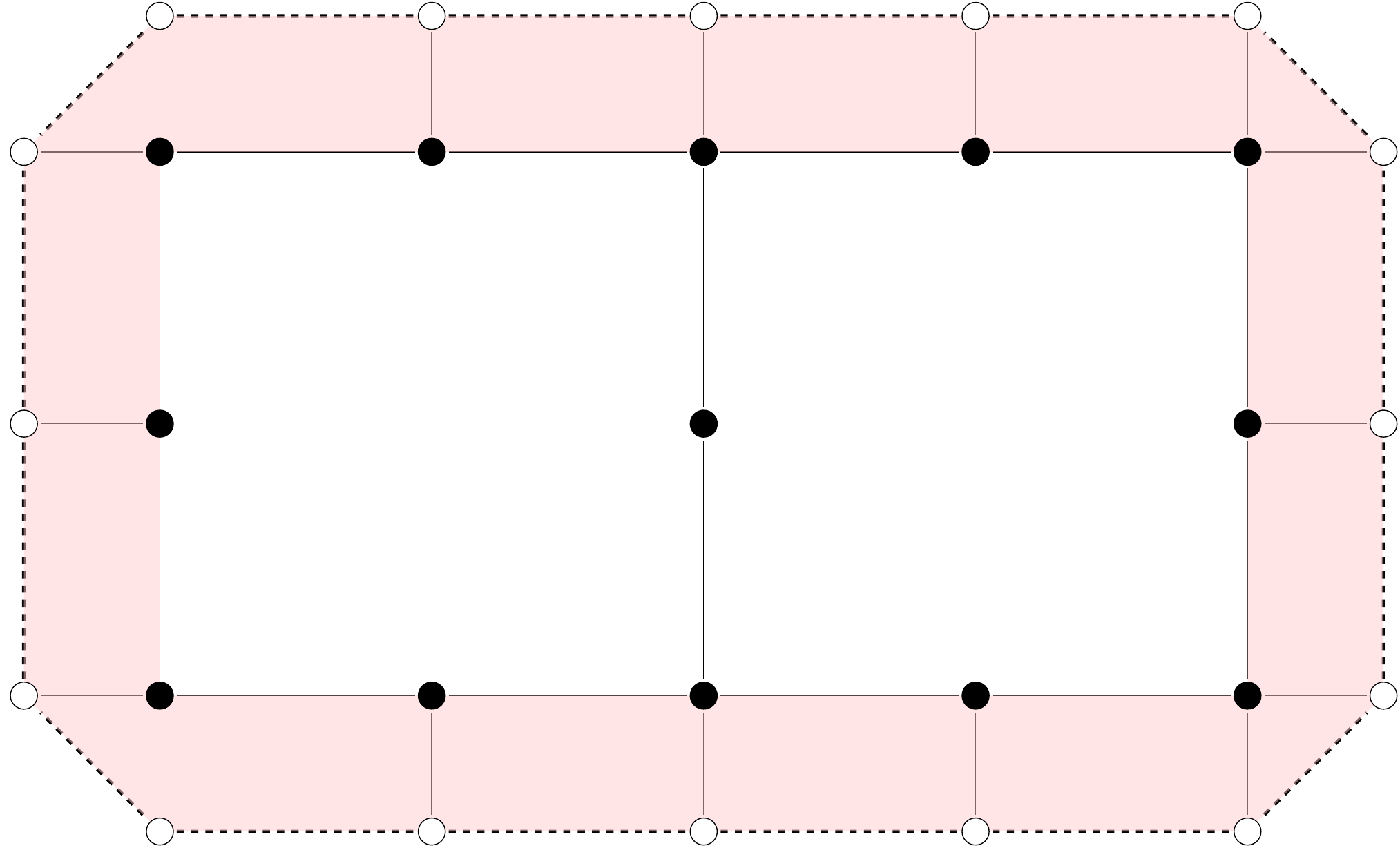}
\caption{Left: A tiling with open and closed boundaries. Dashed edges represent open edges and white vertices are open vertices. The tiling contains two non-open edges whose endpoints are both open vertices. Right: The two corresponding edges in the dual belong to no face. Then this dual tiling is not tiling.}
\label{fig:bad_dual}
\end{figure}

\subsection{Application to generalized surface codes}
\label{sec:GSC}

This notion of homology is chosen to correspond to errors on the surface of 
Def.~\ref{defi:surface_codes} and their syndromes. Indeed, by definition 
a $Z$-error $E_Z \in \{I, Z\}^{\otimes n}$ corresponds to a vector 
$z$ of $C_1$ defined by
$$
z = \sum_{e \in \Supp(E_Z)} e.
$$
Its syndrome $\sigma(E_Z)$ corresponds to the vector $\partial_1(z) \in C_0$ and 
$Z$-stabilizers are given by the vectors $z \in \im \partial_2$. 

Similarly, an error $E_X \in \{I, X\}^{\otimes n}$ corresponds to the vector 
$x = \sum_{e \in \Supp(E_X)} e$ of $C_1$. Two Pauli operators 
$E_X$ and $E_Z$ commute if and only if the corresponding vectors 
$x$ and $z$ are orthogonal for the binary inner product in $C_1$.

Through these isomorphisms, the set of $Z$-stabilizers can be seen as the image 
of the map $\partial_2$ and the $X$-stabilizers are in bijection with the vectors 
of $(\ker \partial_1)^\perp$.
From Lemma~\ref{lemma:boundary_composition}, we have
$\im \partial_2 \subset \ker \partial_1 = ((\ker \partial_1)^\perp)^\perp$. 
This proves that the two spaces corresponding to $S_X$ and $S_Z$ are orthogonal.
In other words, we just proved that the commutation relation between the operators 
$X_v$ and $Z_f$ is ensured by the relation $\partial_1 \circ \partial_2 = 0$:
\begin{coro}
For any surface $G$, the operators $X_v$ for $v \in \mathring{V}$ and $Z_f$ 
for $f \in F$ are commuting.
\end{coro}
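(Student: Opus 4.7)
The plan is to reduce the assertion to the homological identity $\partial_1 \circ \partial_2 = 0$ already established in Lemma~\ref{lemma:boundary_composition}, via the standard dictionary between $\F_2$-linear algebra on $C_1$ and Pauli commutation on the qubits indexed by $\mathring E$.

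First I would recall the symplectic translation spelled out just above the statement: a Pauli of the form $\prod_{e \in A} X_e$ and a Pauli $\prod_{e \in B} Z_e$, with $A, B \subset \mathring E$, commute if and only if $|A \cap B|$ is even, equivalently $\langle \delta_A, \delta_B \rangle = 0$ where $\delta_A, \delta_B \in C_1$ are the indicator chains and $\langle \cdot, \cdot \rangle$ is the canonical bilinear form on $C_1 = \bigoplus_{e \in \mathring E} \F_2 e$. Under this dictionary, $Z_f$ corresponds precisely to $\partial_2(f) \in C_1$, and $X_v$ corresponds to the chain $x_v = \sum_{e \in \mathring E,\, v \in e} e$, which is exactly the image of $v$ under the transpose of $\partial_1$ with respect to the canonical bases of $C_0$ and $C_1$.

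The core step is then a direct computation of the overlap. For $v \in \mathring V$ and $f \in F$,
\[
\langle x_v, \partial_2(f) \rangle = \bigl|\{e \in \mathring E : v \in e \text{ and } e \in f\}\bigr| \pmod 2.
\]
On the other hand, expanding $\partial_1(\partial_2(f)) = \sum_{e \in \mathring E,\, e \in f} \partial_1(e) = \sum_{e \in \mathring E,\, e \in f} \sum_{v' \in \mathring V,\, v' \in e} v'$, the coefficient of $v$ is exactly that same cardinality mod $2$. By Lemma~\ref{lemma:boundary_composition}, $\partial_1 \circ \partial_2 = 0$, so every coefficient of $\partial_1(\partial_2(f))$ vanishes, and in particular the coefficient of $v$ is $0$. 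Therefore $\langle x_v, \partial_2(f)\rangle = 0$, and the dictionary yields $[X_v, Z_f] = 0$.

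I expect no real obstacle: the only mildly delicate point is making sure the dictionary is set up on $\mathring E$ rather than $E$, so that the products defining $X_v$ and $Z_f$ (which already restrict to $e \in \mathring E$) match the chain spaces $C_0, C_1, C_2$ defined using $\mathring V, \mathring E, F$. Once one observes that the boundary maps are designed exactly so that $\partial_1(e)$ records the non-open endpoints of $e$ and $\partial_2(f)$ records the non-open edges of $f$, the computation is immediate and the lemma does the real work.
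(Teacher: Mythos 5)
Your proof is correct and follows essentially the same route as the paper: both translate the commutator into the binary inner product on $C_1$, identify $Z_f$ with $\partial_2(f)$ and $X_v$ with a vector orthogonal to $\ker\partial_1$ (your $x_v = \partial_1^{T}(v)$), and reduce everything to Lemma~\ref{lemma:boundary_composition} via the adjointness $\langle \partial_1^{T}v, \partial_2 f\rangle = \langle v, \partial_1\partial_2 f\rangle = 0$. The paper phrases this as $\im\partial_2 \subset \ker\partial_1 = ((\ker\partial_1)^{\perp})^{\perp}$ rather than computing the coefficient explicitly, but the content is identical.
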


We know that the number of logical qubits $k$ encoded in the surface $G$ is 
given by the rank of the quotient of the group of $Z$-errors of syndrome 0 
by the subgroup of $Z$-stabilizers. As an immediate application, $k$ is the dimension of the first 
homology group $H_1^\partial(G)$. Proposition~\ref{prop:dim_H_1} yields:
\begin{coro} \label{coro:k}
The number of logical qubits $k$ encoded using the surface code associated with $G$
is 
$$
k = - |\mathring{V}| + |\mathring{E}| - |F| + \kappa_{\overline{\partial_O V}}(G)  + \kappa_{\overline{\partial_C E}}(G).
$$
\end{coro}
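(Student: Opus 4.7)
The plan is to deduce the corollary directly from Proposition \ref{prop:dim_H_1} by showing that the number of encoded qubits $k$ coincides with $\dim H_1^\partial(G)$. The chain complex has been set up precisely so that this identification is immediate, so the bulk of the work has already been done.

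First, I would make explicit the dictionary between $Z$-type operators and vectors of $C_1$. A $Z$-error $E_Z \in \{I,Z\}^{\otimes n}$ is identified with $z = \sum_{e \in \Supp(E_Z)} e \in C_1$. Its syndrome is read off from the vertex operators $X_v$ for $v \in \mathring V$, and a direct parity count shows that $E_Z$ commutes with $X_v$ if and only if $v$ appears in an even number of edges of $\Supp(E_Z)$. Hence the syndrome vanishes exactly when $\partial_1(z)=0$, so the group of $Z$-errors of trivial syndrome is in bijection with $\Ker \partial_1$. Second, the generators $Z_f$ of the $Z$-stabilizer subgroup correspond to the vectors $\partial_2(f)$, so the whole $Z$-stabilizer subgroup corresponds to $\im \partial_2$, and Lemma \ref{lemma:boundary_composition} ensures $\im \partial_2 \subset \Ker \partial_1$, which is also the source of the $X_v$--$Z_f$ commutation corollary stated just above.

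Third, I would invoke the standard CSS counting recalled in the background section: the number of logical qubits equals the dimension of the quotient of the $Z$-logicals (i.e.\ $Z$-errors of trivial syndrome) by the $Z$-stabilizers. Under the dictionary above, this is exactly
\[
k \;=\; \dim\bigl(\Ker \partial_1/\im \partial_2\bigr) \;=\; \dim H_1^\partial(G),
\]
and substituting the formula from Proposition \ref{prop:dim_H_1} yields the claim.

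There is no serious obstacle here; the corollary is essentially a bookkeeping step. The one point that deserves a sentence of care is the equality between the number of $Z$-logical classes and the total $k$: for a general stabilizer code one has $k = n - \rank(\mathcal S)$, and for a CSS code this coincides with the $Z$-quotient dimension because $X$- and $Z$-logicals modulo stabilizers are paired by the symplectic form. This is already implicit in the setup of Section \ref{sec:GSC}, so once the dictionary above is in place, the proof reduces to a one-line application of Proposition \ref{prop:dim_H_1}.
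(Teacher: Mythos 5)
Your proposal is correct and follows essentially the same route as the paper: identify $Z$-errors of trivial syndrome with $\Ker\partial_1$ and $Z$-stabilizers with $\im\partial_2$, note that $k$ equals the dimension of the quotient $\Ker\partial_1/\im\partial_2 = H_1^\partial(G)$, and conclude by Proposition~\ref{prop:dim_H_1}. Your extra sentence justifying why the $Z$-quotient dimension equals $k$ for a CSS code is a point the paper simply asserts, so it is a welcome (but not divergent) addition.
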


Consider the minimum distance $d = \min(d_X, d_Z)$ of the code. We already noticed that 
a $Z$-error has trivial syndrome if and only if its support $z \in C_1$ is a cycle and it is not a
stabilizer if and only if this cycle has non-trivial homology. In order to obtain a graphical expression for
the minimum distance, we need a similar result for $d_X$. An error $E_X$ corresponds to a vector 
$x \in C_1$ but its syndrome and $X$-stabilizers are less trivial to describe. Consider this same error
in the dual graph $G^* = (V^*, E^*, F^*)$. Denote by
$$
C_2^* \overset{\partial_2^*}{\longrightarrow} C_1^* \overset{\partial_1^*}{\longrightarrow} C_0^*,
$$
the homology chain-complex associated with the dual surface $G^*$.
Using the correspondence of Table~\ref{tab:correspondences_duality}, we see that 
this duality map $G \rightarrow G^*$ transforms the vector $x = \sum_{e \in \Supp(E_X)} e \in C_1$
in the vector $x^* = \sum_{e \in \Supp(E_X)} e^* \in C_1$, where $e^*$ is the dual edge 
of $e$. Any stabilizer $Z_f$ is sent onto $Z_{v^*_f}$ where $v^*_f$ is the dual vertex 
associated with the face $f$. This relies on the bijection between faces of $G$ and non-open 
vertices of its dual. The syndrome of $E_X$ can therefore be expressed as $\partial_1^*(x^*)$.
The $X$-stabilizers $X_v$, which correspond to non-open vertices are mapped onto the 
vectors of $\im \partial_2^*$. This is also a consequence of the definition of the dual graph and the
correspondences of Table~\ref{tab:correspondences_duality}.
This proves that $d_X$ is the shortest length of a non-trivial cycle of $G^*$.
The next corollary follows.

\begin{coro} \label{coro:d}
$d_Z$ is the minimum length of a relative cycle of $G$ with non-trivial homology
in $H_1^\partial(G)$
and $d_X$ is the minimum length of a relative cycle of $G^*$ with non-trivial homology
in $H_1^\partial(G^*)$.
\end{coro}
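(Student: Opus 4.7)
The plan is to verify the two statements separately but in parallel, since the second is really the first transported through the duality of Section~\ref{section:dual}. Both rely on the dictionary already established: a Pauli $Z$ (resp.\ $X$) error on a subset $S$ of non-open edges is identified with the vector $\sum_{e\in S} e \in C_1$ (resp.\ $C_1^*$), and commutation corresponds to orthogonality for the bilinear form on $C_1$.

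First I would treat $d_Z$. Given an error $E_Z$ with support $z=\sum_{e\in\Supp(E_Z)} e\in C_1$, the syndrome bits are exactly the measurement outcomes of the $X_v$ with $v\in\mathring V$, and by definition of $\partial_1$ these are encoded by the vector $\partial_1(z)\in C_0$. Hence $E_Z$ has trivial syndrome iff $z\in\ker\partial_1$, i.e.\ iff $\Supp(E_Z)$ is a relative cycle in the sense of Section~\ref{section:main_result}. On the other hand, $E_Z$ is a $Z$-stabilizer precisely when it is a product of the generators $Z_f$, i.e.\ when $z\in\im\partial_2$. Therefore non-trivial logical $Z$-operators are in bijection with non-zero classes in $H_1^\partial(G)=\ker\partial_1/\im\partial_2$, and since the weight of $E_Z$ is the Hamming weight of the vector $z$ (the length of the corresponding set of edges), $d_Z$ equals the minimum length of a relative cycle whose homology class is non-trivial.

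For $d_X$ the same argument works after pushing everything through the duality. Using the correspondences of Table~\ref{tab:correspondences_duality}, each non-open edge $e\in\mathring E$ is sent to a unique dual non-open edge $e^*\in\mathring E^*$, each non-open vertex $v\in\mathring V$ is sent to a face of $G^*$ (the local face $F_v$ or $\bar F_v$ from Section~\ref{section:local_dual}), and each face $f\in F$ is sent to a non-open dual vertex $v_f^*\in \mathring V^*$. Under this correspondence the generator $X_v$ becomes $Z_{F_v}$ of the dual surface code, and any $X$-error with support $x\in C_1$ becomes an element $x^*\in C_1^*$ whose $Z$-syndrome in $G^*$ is exactly $\partial_1^*(x^*)$. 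Consequently $E_X$ has trivial $X$-syndrome iff $x^*\in\ker\partial_1^*$ (a relative cycle of $G^*$) and is an $X$-stabilizer iff $x^*\in\im\partial_2^*$ (trivial in $H_1^\partial(G^*)$). Applying the $d_Z$ argument verbatim to the dual surface yields the claim for $d_X$, and then $d=\min(d_Z,d_X)$ finishes the statement.

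The bookkeeping obstacle, and really the only non-trivial step, is confirming that the duality constructed in Section~\ref{section:dual} simultaneously matches (i) qubits with dual qubits, (ii) $X$-stabilizer generators of $G$ with $Z$-stabilizer generators of $G^*$, and (iii) $X$-syndromes of $G$ with $Z$-syndromes of $G^*$. The bijections $\mathring E\leftrightarrow \mathring E^*$, $\mathring V\leftrightarrow F^*$, and $F\leftrightarrow \mathring V^*$ recorded in Table~\ref{tab:correspondences_duality} are exactly what is needed, and once they are in hand the identification $H_1^\partial(G^*)$ for the $X$-sector is immediate from Definition~\ref{defi:H_1} applied to the dual chain complex.
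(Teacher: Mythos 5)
Your proposal is correct and follows essentially the same route as the paper: identify $Z$-errors with vectors of $C_1$, observe that trivial syndrome means $z\in\ker\partial_1$ and stabilizer means $z\in\im\partial_2$ so that non-trivial logical $Z$-operators are the non-trivial classes of $H_1^\partial(G)$, and then transport the argument to the dual surface via the correspondences of Table~\ref{tab:correspondences_duality} to handle $d_X$. The paper only adds, as an aside, that the $d_X$ statement could alternatively be obtained through cellular cohomology of $G$ itself, but its actual proof is the duality argument you give.
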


An equivalent proof of the graphical expression of $d_X$ might be obtained 
by considering the cellular cohomology of the surface $G$. An error 
$E_X$ of trivial syndrome which is not a stabilizer corresponds to a cocycle 
with non-trivial cohomology. Then it suffices to remark that the cohomology 
chain-complex of a surface is isomorphic to the homology complex of its dual.

\section{Packing of logical qubits in a planar architecture} \label{section:planar_codes}

In this section we argue that mixed holes with partially open and partially closed
boundaries may offer an advantage over usual surface codes for storage
of quantum information in a two-dimensional lattice.

Bravyi, Poulin and Terhal \cite{BPT10a} proved that the parameters of any two-dimensional local commuting
projector code over finite dimensional quantum systems embedded in a
square grid are subjected to the bound
$$
k d^2 \leq c n
$$
for some constant $c > 0$ that depends on the locality of the constraints.
This bound can be seen as a tradeoff between the amount of quantum 
information stored in the lattice (given by the number of logical qubits $k$) 
and the error-correction capability of the code (measured by the minimum 
distance $d$). It provides a natural figure of merit to compare different 
quantum computing architectures. Optimizing the constant $c$ can save 
a large amount of resources while keeping roughly the same performance.

\medskip
Let us first provide some intuition for the special case of generalized surface codes 
based on a planar lattice punctured with closed holes.
Each hole represents one logical qubit.
To preserve a large minimum distance $d$, these holes must be separated from
each other by a distance at least $d$. In other words, the neighborhoods 
$B(h, (d-1)/2)$ of each hole $h$ containing all the qubits within distance 
$(d-1)/2$ from hole $h$ do not overlap. 
These buffers of physical qubits surrounding each holes each consists $\Omega(d^2)$ qubits
due to the two-dimensional geometry of the lattice (here, we assume that our lattice is Eucliean,
for instance hyperbolic codes constructed in \cite{FML02, Ze09, DZ10, BT15:hyperbolic} are locally planar but 
are not subjected to this tradeoff \cite{De13:tradeoffs}).
This shows that encoding $k$ logical qubits using a uniform planar surface code with 
minimum distance $d$ requires at least $n = \Omega(kd^2)$ qubits. This argument also
emphasizes the resemblance with a sphere packing problem.

\medskip
In this section, we will use the formalism of generalized surface codes to improve the constant
$c$ over previously known constructions.

\subsection{Underlying lattice}

Let us fix some notations and definitions. We focus on a square lattice of
qubits. Starting with a finite region of a planar square lattice with closed 
boundaries, we will encode qubits as holes in this region. Two kinds of 
regions of the square lattice are considered in what follows, as depicted 
in Figure~\ref{fig:underlying_lattice}.

\begin{figure}[h]
\begin{center}
\includegraphics[scale=.2]{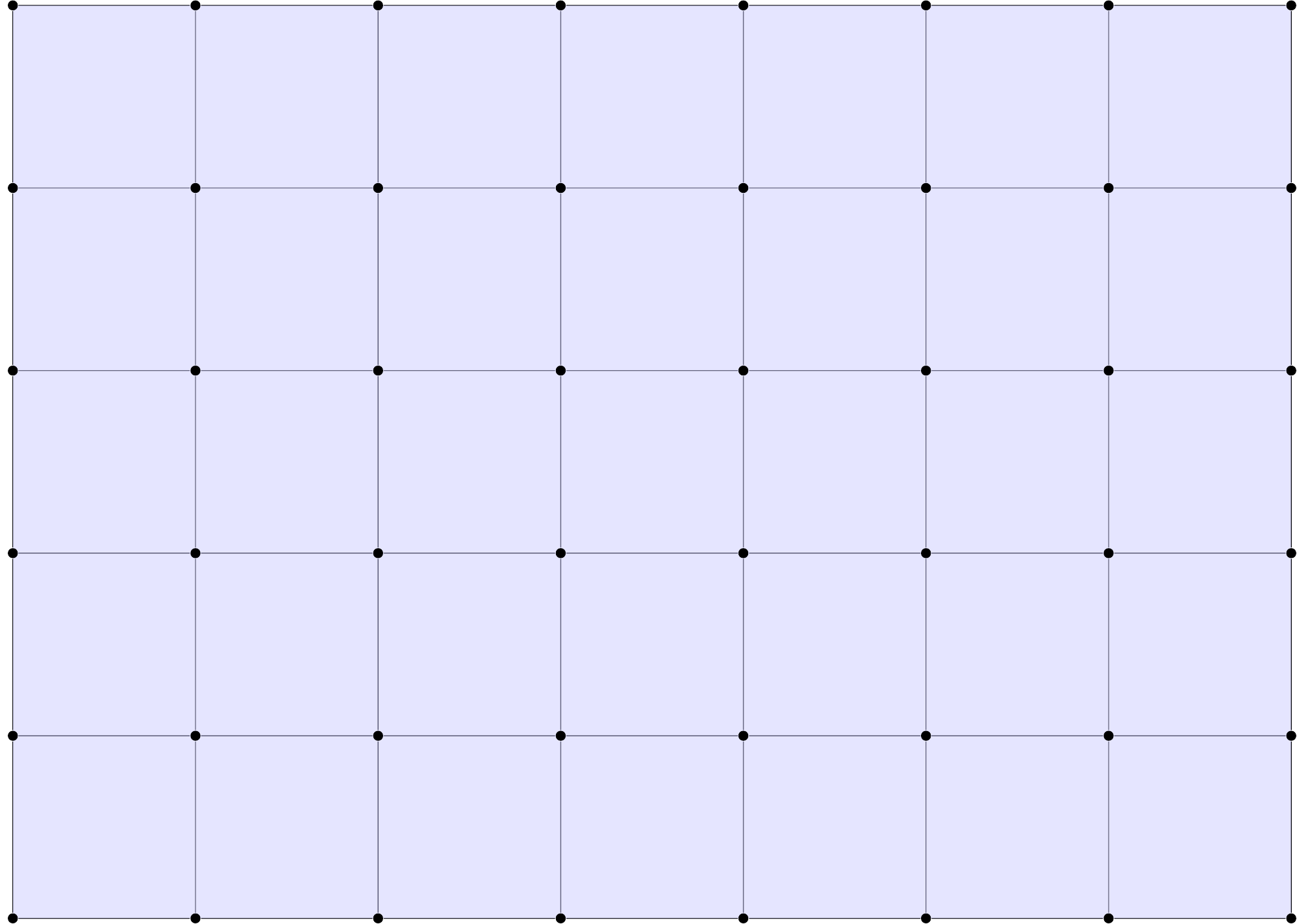}
\hspace{.2cm}
\includegraphics[scale=.2]{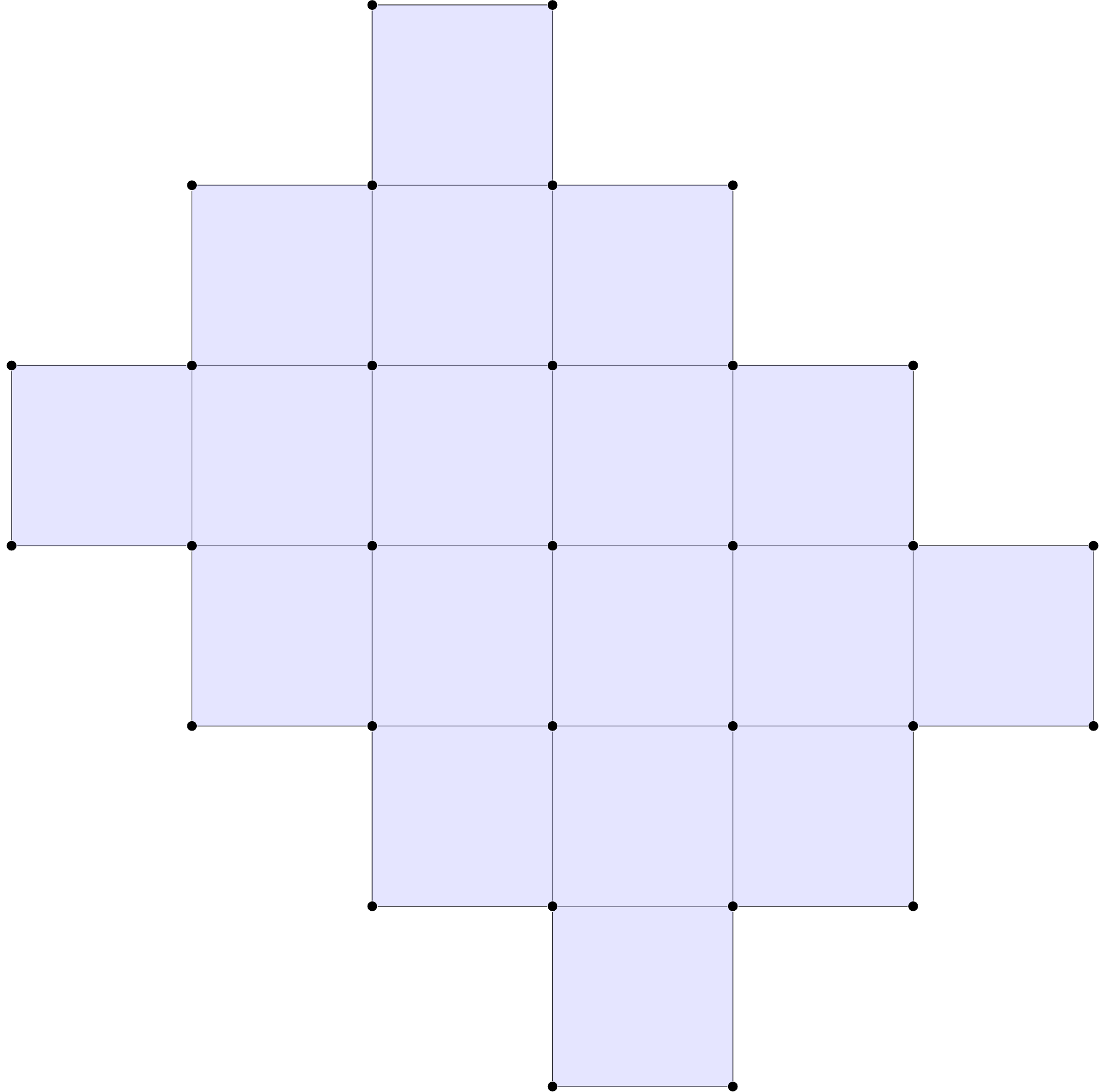}
\caption{Left: A $(7 \times 5)$ square lattice. Right: A $(3 \times 4)$-rotated square lattice.}
\label{fig:underlying_lattice}
\end{center}
\end{figure}

By a {\em $(L \times L')$-square lattice}, we mean the subgraph of $\Z^2$
induced by the vertices $(x, y) \in [0, L] \times [0, L']$.
This lattice contains 
$LL' + L+L'+1$ vertices, 
$2LL' + L+L'$ edges and 
$LL'$ faces.
See Figure~\ref{fig:underlying_lattice} for an example.

The second region of the square lattice that we use will be called
a {\em $(L \times L')$-rotated square lattice}. It is a patch of the square
lattice $\Z^2$ by a rectangle of length $\sqrt{2} L \times \sqrt 2 L'$ 
rotated by $45^\circ$. For instance, a $(3 \times 4)$-rotated square lattice 
is shown in Figure~\ref{fig:underlying_lattice}.
Such a rotated square lattice is said to have size $L \times L'$. 
It contains 
$2LL' + L + L'$ vertices, 
$4 L L'$ edges and 
$2LL' -L - L' +1$ faces.

\subsection{Square hole architecture}

As a first example, we consider a square lattice punctured with square holes. 
This construction which is one of the most widely studied quantum computing
architecture \cite{RH07, RHG06, RHG07, FMMC12}, will be referred to as the 
{\em square hole architecture} or the {\em square hole surface code} and 
will be denoted $\Sq(h, h', t)$ or simply $\Sq(h, t)$ when $h= h'$. 
The parameters $h$, $h'$ and $t$ are non-negative integers.
Each qubit corresponds to a $(t \times t)$ closed hole and these holes are separated to one another 
and to the boundary by a distance $4t-1$. This ensures that the minimum distance of the code 
is $d=4t$. This lattice is represented in Figure~\ref{fig:square_holes}.
The number of encoded qubits is given by the number of holes. In order to create 
$(h \times h')$ holes, this structure requires a $(L_h \times L_{h'})$-square lattice 
where $L_h = h(5t-1)+(4t-1)$. 

\begin{figure}[h]
\centering
\begin{tikzpicture}[scale=.5, every node/.style={scale=.6}]


\draw
	(0,0) -- (15,0) -- (15,11) -- (0,11) -- (0,0);
\fill[color=blue!20, opacity=.5]
	(0,0) -- (15,0) -- (15,11) -- (0,11) -- (0,0);
	
\draw
	(3,3) -- (4,3) -- (4,4) -- (3,4) -- (3,3);
\fill[color=white]
	(3,3) -- (4,3) -- (4,4) -- (3,4) -- (3,3);

\draw[xshift = 4cm]
	(3,3) -- (4,3) -- (4,4) -- (3,4) -- (3,3);
\fill[color=white, xshift = 4cm]
	(3,3) -- (4,3) -- (4,4) -- (3,4) -- (3,3);

\draw[xshift = 8cm]
	(3,3) -- (4,3) -- (4,4) -- (3,4) -- (3,3);
\fill[color=white, xshift = 8cm]
	(3,3) -- (4,3) -- (4,4) -- (3,4) -- (3,3);

\draw[yshift = 4cm]
	(3,3) -- (4,3) -- (4,4) -- (3,4) -- (3,3);
\fill[color=white, yshift = 4cm]
	(3,3) -- (4,3) -- (4,4) -- (3,4) -- (3,3);

\draw[xshift = 4cm, yshift = 4cm]
	(3,3) -- (4,3) -- (4,4) -- (3,4) -- (3,3);
\fill[color=white, xshift = 4cm, yshift = 4cm]
	(3,3) -- (4,3) -- (4,4) -- (3,4) -- (3,3);

\draw[xshift = 8cm, yshift = 4cm]
	(3,3) -- (4,3) -- (4,4) -- (3,4) -- (3,3);
\fill[color=white, xshift = 8cm, yshift = 4cm]
	(3,3) -- (4,3) -- (4,4) -- (3,4) -- (3,3);
	
\draw[<->]
	(3.2, 3) -- (3.2, 4) node [midway, right] {$t$};
\draw[<->]
	(3, 3.3) -- (4, 3.3);

\draw[<->]	
	(3.5, 3) -- (3.5, 0) node [midway, left] {$4t-1$};
\draw[<->, yshift=4cm]	
	(3.5, 3) -- (3.5, 0) node [midway, left] {$4t-1$};

\draw[<->]	
	(0, 3.5) -- (3, 3.5) node [midway, above] {$4t-1$};
\draw[<->, xshift=4cm]	
	(0, 3.5) -- (3, 3.5) node [midway, above] {$4t-1$};
	
\draw[color = red, line width = 1pt]
	(10.5,2.5) -- (12.5,2.5) -- (12.5,4.5) -- (10.5,4.5) -- (10.5,2.5);	

\draw[color = blue, line width = 1pt]
	(12,3.25) -- (12,3.75)
	(12.5,3.25) -- (12.5,3.75)
	(13,3.25) -- (13,3.75)	
	(13.5,3.25) -- (13.5,3.75)
	(14,3.25) -- (14,3.75)	
	(14.5,3.25) -- (14.5,3.75)
	(15,3.25) -- (15,3.75);

\draw
	(10.2,3) node {{\color{red}$\bar Z_i$}};
\draw
	(14,4) node {{\color{blue}$\bar X_i$}};

\end{tikzpicture}
\hspace{.2cm}
\includegraphics[scale=.3]{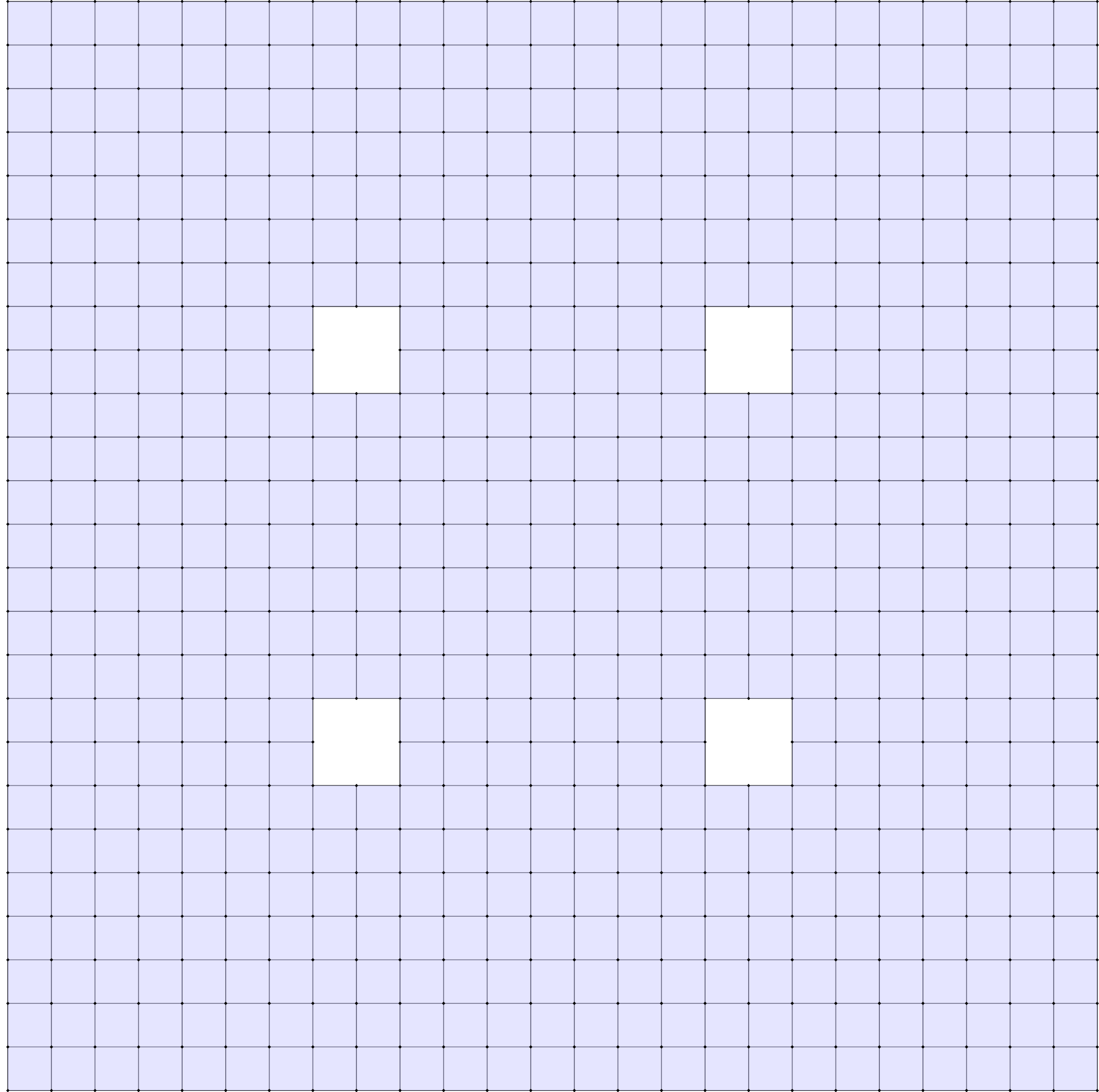}

\caption{Left: Structure of the square hole architecture with minimum distance $d = 4t$. 
Right: An example with 4 holes encoding k=4 qubits with minimum distance $d = 8$.}
\label{fig:square_holes}
\end{figure}

\medskip
Let us describe logical operations for these codes. 
Consider for instance a square hole surface code with $k$ holes.
As illustrated in Figure~\ref{fig:square_holes}, one can choose 
$\bar Z_i$ which is a $Z$-operator whose support is a cycle enclosing the $i$-th hole and 
$\bar X_i$ which is a path in the dual graph connecting this hole to the boundary
of the planar lattice.

\begin{example}
The square hole architecture $\Sq(h, t)$ yields a surface code whose
parameters satisfy
$$
n \sim 3kd^2
$$
when both $k$ and $d\rightarrow +\infty$.
\end{example}

The computation of these parameters relies on the following remark.
When puncturing, $2t^2 - 2t$ qubits are removed per hole.
There remains $n = 2 L^2 + 2 L - h^2 (2t^2-2t)$ physical qubits.
Indeed, a $(t \times t)$-square region of the square lattice contains $2t^2+2t$ edges,
its boundary contains $4t$ edges and its interior contains $2t^2-2t$ edges.
This proves that when $k, d \rightarrow +\infty$.
\begin{align*}
n  = 2 L^2 + 2 L - h^2 (2t^2-2t) \sim 48 k t^2 \sim 3 k d^2 \cdot
\end{align*}
Therein, we used $k = h^2$ which is the number of holes and $t \sim d/4$ when $d$ diverges.

\subsection{Diamond hole architecture} \label{section:diamond_hole}

\begin{figure}[h]
\begin{tikzpicture}[scale=.35, every node/.style={scale=.4}]


\draw
	(0,0) -- (15,15) -- (4,26) -- (-11,11) -- (0,0);
\fill[color=blue!20, opacity=.5]
	(0,0) -- (15,15) -- (4,26) -- (-11,11) -- (0,0);
	
\draw[xshift = 0cm, yshift = 6cm]
	(0,0) -- (1,1) -- (0,2) -- (-1,1) -- (0,0);
\fill[color=white, xshift = 0cm, yshift = 6cm]
	(0,0) -- (1,1) -- (0,2) -- (-1,1) -- (0,0);

\draw[xshift = 4cm, yshift = 10cm]
	(0,0) -- (1,1) -- (0,2) -- (-1,1) -- (0,0);
\fill[color=white, xshift = 4cm, yshift = 10cm]
	(0,0) -- (1,1) -- (0,2) -- (-1,1) -- (0,0);

\draw[xshift = 8cm, yshift = 14cm]
	(0,0) -- (1,1) -- (0,2) -- (-1,1) -- (0,0);
\fill[color=white, xshift = 8cm, yshift = 14cm]
	(0,0) -- (1,1) -- (0,2) -- (-1,1) -- (0,0);

\draw[xshift = -4cm, yshift = 10cm]
	(0,0) -- (1,1) -- (0,2) -- (-1,1) -- (0,0);
\fill[color=white, xshift = -4cm, yshift = 10cm]
	(0,0) -- (1,1) -- (0,2) -- (-1,1) -- (0,0);

\draw[xshift = 0cm, yshift = 14cm]
	(0,0) -- (1,1) -- (0,2) -- (-1,1) -- (0,0);
\fill[color=white, xshift = 0cm, yshift = 14cm]
	(0,0) -- (1,1) -- (0,2) -- (-1,1) -- (0,0);

\draw[xshift = 4cm, yshift = 18cm]
	(0,0) -- (1,1) -- (0,2) -- (-1,1) -- (0,0);
\fill[color=white, xshift = 4cm, yshift = 18cm]
	(0,0) -- (1,1) -- (0,2) -- (-1,1) -- (0,0);

\draw[<->, xshift = 4cm, yshift = 4cm]
	(0, 6.1) -- (0, 7.9) node [midway, right, above] {$2t-1$};
\draw[<->, xshift = 4cm, yshift = 4cm]
	(-.9, 7) -- (.9, 7);

\draw[<->, xshift = 4cm, yshift=4cm]
	(-.5, 6.5) -- (-3.5, 3.5) node [midway, below right] {$d$};
\draw[<->, xshift = 8cm, yshift=8cm]
	(-.5, 6.5) -- (-3.5, 3.5) node [midway, below right] {$d$};

\draw[<->]
	(3.5, 11.5) -- (0.5, 14.5) node [midway, above right] {$d$};
\draw[<->, xshift = 4cm, yshift=-4cm]
	(3.5, 11.5) -- (0.5, 14.5) node [midway, above right] {$d$};
	
\draw[<->]
	(3, 11) -- (-3,11) node [midway, below right] {$d$};
\draw[<->, xshift = 8cm, yshift=0cm]
	(3, 11) -- (-3,11) node [midway, below right] {$d$};

\draw[<->]
	(4, 10) -- (4,4) node [midway, right] {$d$};
\draw[<->, xshift = 0cm, yshift=8cm]
	(4, 10) -- (4,4) node [midway, right] {$d$};

\node[inner sep=0pt] (ex2) at (18,21)
    {\includegraphics[width=.6\textwidth]{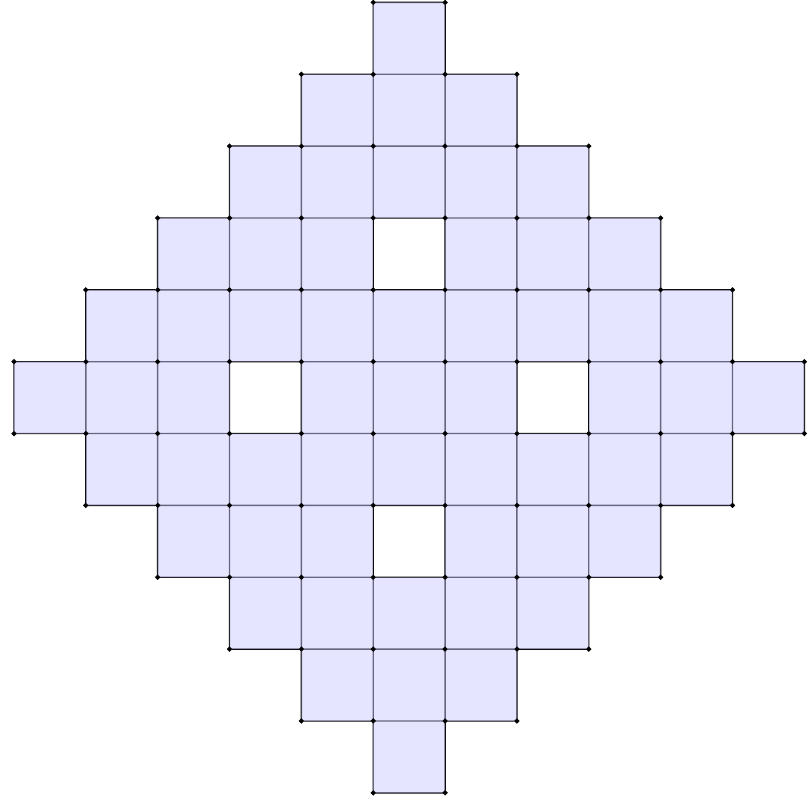}};
\node[inner sep=0pt] (ex1) at (16,6)
    {\includegraphics[width=.9\textwidth]{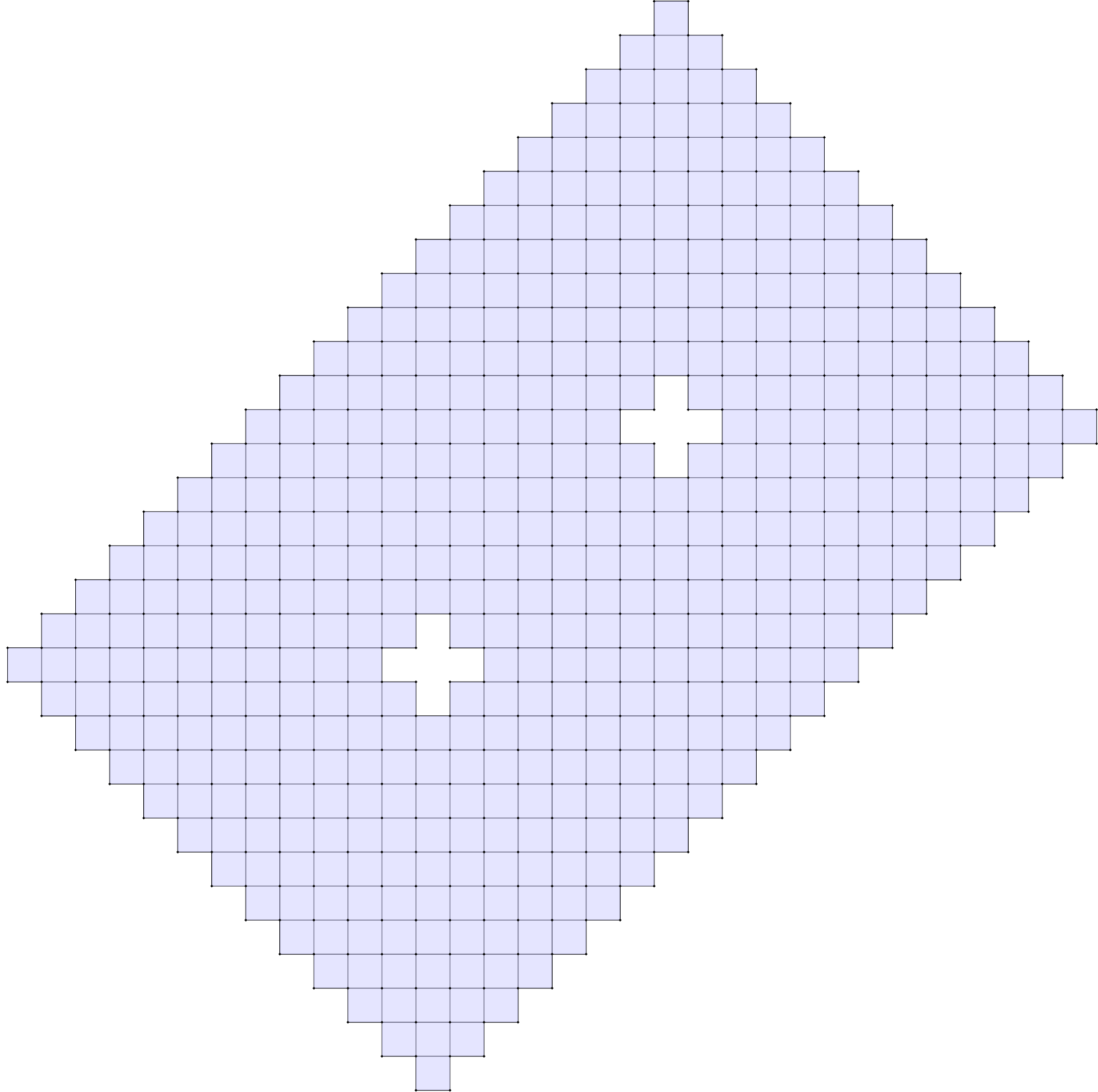}};
    
\end{tikzpicture}

\caption{Left: Representation of the diamond hole architecture 
with minimum distance $d = 4(2t-1)$ with the $X$-distance between holes and
the boundary.
Top right: A diamond-hole surface code with 4 holes encoding $k=4$ qubits with 
minimum distance $d = 4$. 
Bottom right: A diamond-hole surface code with 2 holes encoding $k=2$ qubits with minimum distance $d = 12$.}
\label{fig:diamond_holes}
\end{figure}

We can easily reduce the number of physical qubits required in the 
previous family of codes, for instance by simply truncating the corner. 
Indeed, the qubits at each of the 4 corners of the lattice are far away from
each hole. Cutting these corners will not reduce the minimum distance.
More generally, note that a hole in the square hole architecture is 
surrounded by only 4 closest holes (2 in the horizontal direction and 2 in the
vertical direction). We will construct a packing of these holes that requires 
much less resources. 

\medskip
In order to get some intuition on this packing problem, we introduce 
the {\em $X$-distance} between two holes which is defined as the minimum 
length of a path in the dual graph connecting these two holes. Such a path
is the support of a $X$-logical operator.
Given a punctured lattice, the $X$-distance between two distinct holes is 
at least $d_X$. Two holes separated by a $X$-distance exactly $d_X$ are 
called {\em tangent holes}. They cannot be closer.
For instance, in the square hole architecture each hole is tangent to up to 4 holes.
Based on the similarity with a sphere packing problem, we choose a punctured lattice 
such that each hole is tangent to a large number of holes.

\medskip
Let us now define the {\em diamond hole architecture} or {\em  diamond-hole surface codes}
in which each hole is tangent to up to 8 distinct holes.
It will be denoted $\D(h, h', t)$ or $\D(h, t)$ when $h = h'$,
where $h, h'$ and $t$ are non-negative integers.
This is also a surface codes based on a planar square lattice punctured by
holes with closed boundaries. It is represented in Figure~\ref{fig:diamond_holes}.
The parameter $t$ fixes the size of holes. For $t = 1$, each hole is
simply a single face. For $t = 2$ we extend this hole by 
adding the 4 faces incident to it. These 5 faces correspond to a vertex and 
its neighbours in the dual graph, that is a ball of radius 1 in the dual.
More generally, a hole is the dual of a ball of radius $t$ in the dual graph.
The minimum distance will be given by the perimeter of holes, that is 
$d = 4(2t-1)$.
A set of $h \times h'$ holes are punctured in a rotated square lattice as 
in Figure~\ref{fig:diamond_holes}.
To keep a minimum distance $d = 4(2t-1)$, holes are separated to one another and 
to the boundary of the lattice by a $X$-distance $d$.
It was already noticed that such a rotation increases the distance for  
toric codes (without punctures) \cite{BM06:optimal_toric_codes}.

Just as with square holes, logical operators are generated by the
$Z$-operators wrapping around holes and $X$-operators connecting holes
to the boundary in the dual graph.

\begin{example}
The diamond hole architecture $\D(h, t)$ yields a surface code whose
parameters satisfy
$$
n \sim1.5 k d^2
$$
when $k$ and $d\rightarrow \infty$.
\end{example}

Indeed, the total number of physical qubits in $D(h, t)$
is $n = 4(h(t+d/2) + d/2)^2 - 4 h^2 t^2$.
For this enumeration remark that, before puncturing, the rotated lattice 
contains $4 L_h^2$ edges. Each puncture removes $4t^2$ edges, leaving us with $4 L_h^2 - 4 h t^2$.
When both $k$ and $d$ diverge, we get
\begin{align*}
n = 4(h(5d/8) + d/2)^2 - 4 h^2 (d/8)^2 \sim (96/64)^2 k d^2
\end{align*}
using $t \sim d/8$ and $k = h^2$.
 
\medskip
One may wonder whether we can place these holes in such a way that a hole is tangent to
more than 8 holes.
This diamond hole lattice is a {\em perfect lattice} in the following sense.
For each qubit of the lattice, either there is a unique hole at $X$-distance less than $d/2$
or it is at distance less $d/2$ from the outer boundary of the lattice.
Stated differently, this proves that the regions $B(h, d/2)$ at $X$-distance $d/2$
from holes and from the boundary form a partition of the lattice.
Therein, the $X$-distance is extended to measure the distance between 2 edges as follows.
The $X$ distance between 2 edges $e$ and $f$ is the minimum length of a path of the dual graph
whose first edge is $e$ and last edge is $f$.
These regions are exactly the neighborhood of the holes considered
in the introduction of this section to prove the bound $k d^2 = O(n)$.
The term perfect lattice is chosen for the resemblance with perfect codes \cite{MS77}.
This argument does not exclude a better tradeoff for holes with a different shape.

\subsection{Mixed boundary diamond hole architecture}

In order to further improve the parameters of surface codes, we will
introduce open boundaries around every hole. Our basic
idea is to divide the boundary of a hole in an alternate sequence of open and closed
paths. This reduces the minimum distance. One can then shrink the lattice,
cutting the required number of physical qubits. Moreover, the number of
encoded qubits $k$ increases. Overall, we will prove that this transformation allows
us to achieve a better tradeoff. 

\begin{figure}[h]
\begin{center}
\begin{tikzpicture}[scale=.5, every node/.style={scale=.3}]


\draw
	(0,0) -- (7,7) -- (2,12) -- (-5,5) -- (0,0);
\fill[color=blue!20, opacity=.5]
	(0,0) -- (7,7) -- (2,12) -- (-5,5) -- (0,0);
	
\draw[xshift = 0cm, yshift = 2cm]
	(0,0) -- (1,1)
	(0,2) -- (-1,1);
\draw[xshift = 0cm, yshift = 2cm, dotted, thick]
	(1,1) -- (0,2)
	(-1,1) -- (0,0);
\fill[color=white, xshift = 0cm, yshift = 2cm]
	(0,0) -- (1,1) -- (0,2) -- (-1,1) -- (0,0);

\draw[xshift = 2cm, yshift = 4cm, dotted, thick]
	(0,0) -- (1,1)
	(0,2) -- (-1,1);
\draw[xshift = 2cm, yshift = 4cm]
	(1,1) -- (0,2)
	(-1,1) -- (0,0);
\fill[color=white, xshift = 2cm, yshift = 4cm]
	(0,0) -- (1,1) -- (0,2) -- (-1,1) -- (0,0);

\draw[xshift = 4cm, yshift = 6cm, dotted, thick]
	(0,0) -- (1,1)
	(0,2) -- (-1,1);
\draw[xshift = 4cm, yshift = 6cm]
	(1,1) -- (0,2)
	(-1,1) -- (0,0);
\fill[color=white, xshift = 4cm, yshift = 6cm]
	(0,0) -- (1,1) -- (0,2) -- (-1,1) -- (0,0);

\draw[xshift = -2cm, yshift = 4cm, dotted, thick]
	(0,0) -- (1,1)
	(0,2) -- (-1,1);
\draw[xshift = -2cm, yshift = 4cm]
	(1,1) -- (0,2)
	(-1,1) -- (0,0);
\fill[color=white, xshift = -2cm, yshift = 4cm]
	(0,0) -- (1,1) -- (0,2) -- (-1,1) -- (0,0);

\draw[xshift = 0cm, yshift = 6cm]
	(0,0) -- (1,1)
	(0,2) -- (-1,1);
\draw[xshift = 0cm, yshift = 6cm, dotted, thick]
	(1,1) -- (0,2)
	(-1,1) -- (0,0);
\fill[color=white, xshift = 0cm, yshift = 6cm]
	(0,0) -- (1,1) -- (0,2) -- (-1,1) -- (0,0);

\draw[xshift = 2cm, yshift = 8cm]
	(0,0) -- (1,1)
	(0,2) -- (-1,1);
\draw[xshift = 2cm, yshift = 8cm, dotted, thick]
	(1,1) -- (0,2)
	(-1,1) -- (0,0);
\fill[color=white, xshift = 2cm, yshift = 8cm]
	(0,0) -- (1,1) -- (0,2) -- (-1,1) -- (0,0);

\node[inner sep=0pt] (ex1) at (13,6)
    {\includegraphics[width=1.6\textwidth]{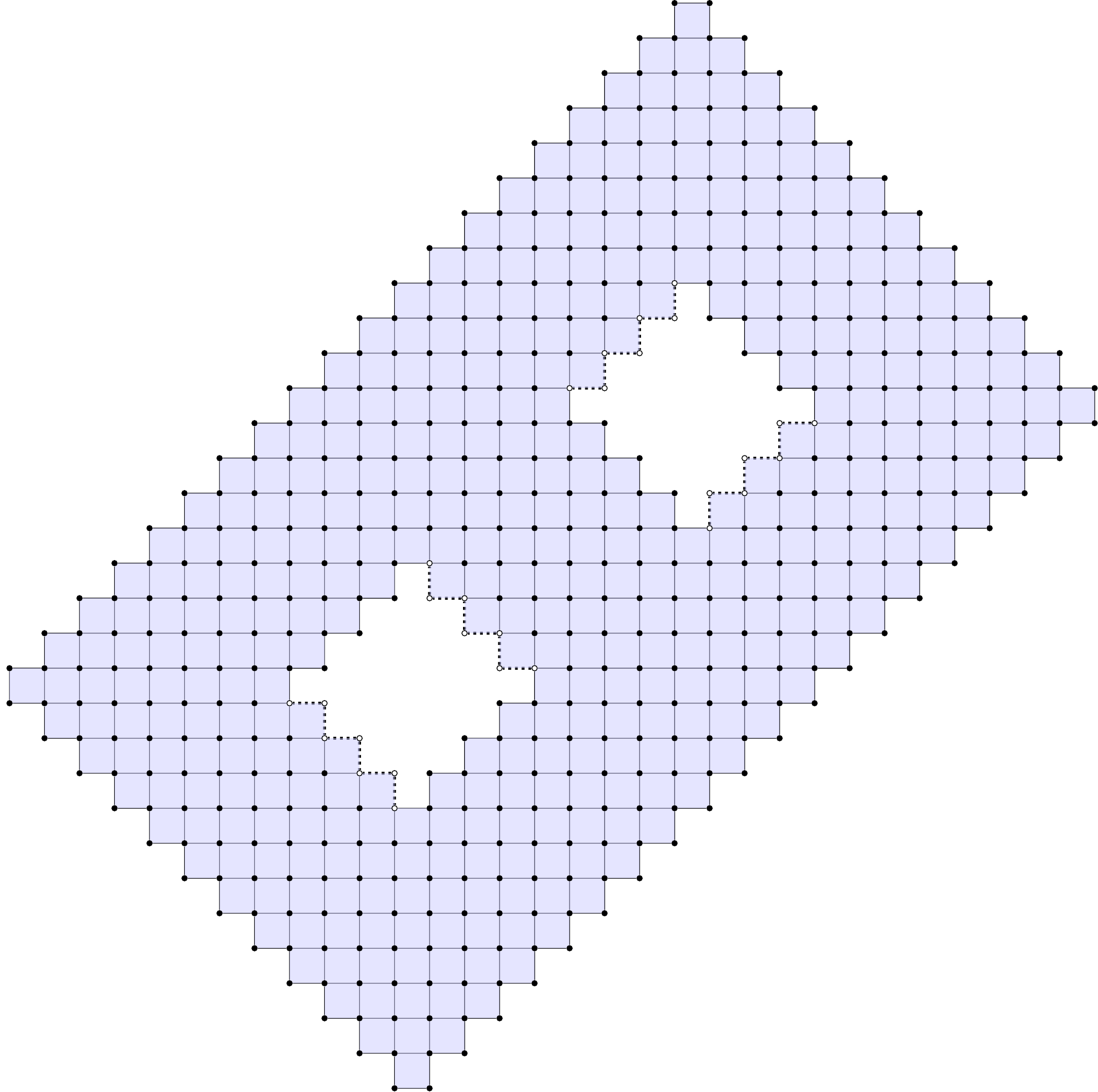}};
\end{tikzpicture}

\caption{Left: Structure of the mixed diamond hole architecture.
Dotted edges represent open boundaries. The distance between two holes
coincides with the size of holes.
Right: A mixed diamond-hole surface code with 2 holes encoding 
$k=5$ qubits with minimum distance $d = 8$.}
\label{fig:mixed_diamond_holes}
\end{center}
\end{figure}

We start with a lattice punctured with diamond holes like in 
Section~\ref{section:diamond_hole} and we will open some of the 
boundaries of the holes as shown in Figure~\ref{fig:mixed_diamond_holes}.

Let us define the {\em mixed diamond hole architecture} that we denote 
$\D_4(h, h', t)$ or $\D_4(h, t)$ when $h = h'$.
The underlying lattice is a rotated square lattice. It is punctured 
by diamond holes whose size depends on the parameter $t$ just 
as in the diamond hole architecture. Recall that a diamond hole is
the dual of a ball of radius $t$.
The edges on 2 opposite sites of these holes are then declared to be open.
The boundary of such a hole is an alternate sequence of 2 closed paths
and 2 open paths as we can see in Figure~\ref{fig:mixed_diamond_holes}.
More precisely, the perimeter of such a hole contains $4(2t-1)$ edges, 2 paths
of length $2t-2$ are declared to be open and 2 paths of length $2t$ are kept closed.
The minimum distance of a code based on such punctures is then at most 
$d = 2t$. We fix the $X$-distance between holes to be as small as possible and 
at least $2t$ in order to obtain a minimum distance $d = 2t$ for the code.
Open and closed sides of holes are chosen such that adjacent holes
face each other with different kinds of boundaries. This considerably reduces 
the number of low-weight errors.

\begin{example}
The mixed diamond hole architecture $\D_4(h, t)$ yields a surface code whose
parameters satisfy
$$
n \sim k d^2
$$
when $k$ and $d\rightarrow \infty$.
\end{example}

Adapting the arguments of the previous section, we see that the architecture 
$\D_4(h, t)$ requires a rotated square lattice of size $L_h \times L_h$
where $L_h = h(d/2+d/2)+d/2$ and that $h^2$ rotated square holes
of size $d/2 \times d/2$ are removed.
We obtain $n \sim 4 h^2 d^2 - 4 h^2 (d/2)^2 = 3 h^2 d^2 = kd^3$
since we have $k = 3h^2-1 \sim 3h^2$.

\section{Concluding remarks}

We have obtained a 3-fold reduction in the number of physical qubits compared to the 
most widely studied surface code architecture \cite{FMMC12}. Note that this improvement 
was achieved without affecting the minimum distance. In general, the performance 
of a code is dictated not only by the minimum distance but also by the number of 
distinct errors achieving this minimum distance, and more generally by the entire 
weight enumerator polynomial. This is why open and closed sides of holes were 
chosen such that neighbours holes face each other with different kinds of boundaries, 
considerably reducing the number of low-weight errors. The problem of optimizing 
a lattice geometry taking into account these combinatorial factors is in general very hard. 
In an upcoming paper, we will present a linear-time benchmarking algorithm which provides 
a quick way of numerically comparing different geometries.

\medskip
The naive bound explained in the introduction of Section~\ref{section:planar_codes} proves that 
$kd^2 \leq n$ for planar square lattice architecture based on closed holes.
We believe that the diamond-hole architecture provides the best tradeoff
for planar Euclidean surface codes with closed boundaries, so the packing argument 
presented in the introduction of this section can probably be refined to prove that 
$kd^2 \leq 1.5n$ for such codes. We have shown that surface codes with mixed boundaries 
can violate this bound. To encode a single logical qubit, the rotated surface
code \cite{BM06:optimal_toric_codes, HFDV12:surface_code_surgery, TS14:small_surface_codes,
GCS15:quantum_architecture} remains the best alternative.

\medskip
We exploited our formalism to construct surface codes with better parameters. Optimizing the
parameters of surface codes and in particular the minimum distance naturally leads to a better
performance for a depolarizing noise. However, a different type noise may require a different strategy.
For instance, to fight the effect of an asymmetric Pauli noise with a high $Z$-error probability 
and a low $X$-error probability, one should consider asymmetric surface codes with a much stronger
error-correction capability against $Z$-errors than $X$-errors \cite{DT14:correlations}.
More generally, it is crucial to understand the behaviour of generalized surface codes under more
general non-Pauli noise.


\medskip
{\bf Acknowledgement:} 
The authors would like to thank Aleksander Kubica and Fernando Pastawski for their
comments on a preliminary version of this article.

ND was supported by the U.S. Army Research Office under Grant 
No. W911NF-14-1-0272 and by the NSF under Grant No. PHY-1416578,
ND acknowledges funding provided by the Institute for Quantum Information and Matter, 
an NSF Physics Frontiers Center (NSF Grant PHY-1125565) with support of 
the Gordon and Betty Moore Foundation (GBMF-2644).
PI and DP were supported by Canada's NSERC and by the Canadian Institute 
for Advanced Research


\newcommand{\SortNoop}[1]{}

\end{document}